%% file: main.tex
\documentclass[acmsmall,screen,nonacm]{acmart}
\input{preamble}

\newif\ifsubmission\submissionfalse

\begin{document}

\title{Analyzing Decoders for Quantum Error Correction}

\author{Abtin Molavi}
\affiliation{\institution{University of Wisconsin-Madison}\country{USA}}
\email{amolavi@wisc.edu}

\author{Feras Saad}
\affiliation{\institution{Carnegie Mellon University}\country{USA}}
\email{fsaad@cmu.edu}

\author{Aws Albarghouthi}
\affiliation{\institution{University of Wisconsin-Madison}\country{USA}}
\email{aws@cs.wisc.edu}

\begin{abstract}
  Quantum error correction (\qec) enables reliable computation on noisy hardware by encoding logical information across many physical qubits and periodically measuring parities to detect errors.
  A \emph{decoder} is the classical algorithm that uses these measurements to infer which error most likely occurred, so that the system can correct it.
  The decoder's \emph{accuracy}---how rarely it makes the wrong guess---directly determines the scale of quantum computation that can be reliably executed.
  With a wealth of competing decoding algorithms, a \qec system designer needs reliable methods to evaluate them.
  Today, the dominant approach is to evaluate decoders using Monte Carlo simulation.
  However, simulation has several drawbacks such as requiring many samples to produce low variance estimates.

  In this work, we develop a new systematic analysis for evaluating decoders.
  We introduce a novel formal semantics of a core language for \qec programs
  that captures the de facto standard \stim circuit format, providing a principled theoretical foundation for the emerging space of fault-tolerant quantum systems design.
  Given a \qec program and a decoder, we can quantify both the decoder \emph{accuracy} and the decoder \emph{robustness} to drift in physical error rate.
  Our approach has two key components: \begin{enumerate*}[label=(\roman*)]
    \item a structured search over the space of possible errors; and
    \item a constrained polynomial optimization kernel.
  \end{enumerate*}
  A thorough empirical evaluation of our approach suggests that it can outperform simulation, especially in low error rate regimes, and that it can be deployed to quantify decoder robustness over an interval of physical error rates.
\end{abstract}
\maketitle

\section{Introduction}

Quantum computation promises to surpass classical methods in important domains, potentially
unlocking breakthroughs in materials science, chemistry, machine learning, and beyond. %
However, these practical applications often require billions or trillions of precise operations, while hardware quantum bits (qubits) are fragile and error-prone.
Fortunately, we can bridge the gap with quantum error correction (\qec).

\qec achieves fault-tolerance with redundancy, using several physical qubits to encode the state of each \emph{logical} qubit.
Repeated cycles of parity measurements detect errors without destroying the logical state.
In the absence of error, each cycle yields the same measurement outcomes. 
When a physical qubit is affected by an error between two measurement cycles, one or more of the measurements will flip (from 0 to 1 or 1 to 0).
The computational task of inferring precisely which error has occurred from the \emph{syndrome} of measurement outcomes is called \emph{decoding}.
Since there are multiple error patterns which can explain each syndrome, decoding is not a perfect process.
Instead, a decoding algorithm makes a probabilistic inference, aiming to find the \emph{most likely} error which can explain the syndrome.
 
When decoding fails and the inferred error is inequivalent to the actual error, the result is a \emph{logical error}.
The logical error rate, or \emph{decoder accuracy}, determines the reliability of an error-corrected quantum computer and the scale of computation which can be accurately executed.

\qec systems depend on efficient and accurate decoders to achieve acceptable logical error rates,
and a substantial research effort has been dedicated to the development of novel decoding algorithms.
The range of strategies that have been applied to the decoding problem include approaches based on 
Edmonds's classical blossom shrinking algorithm \cite{pymatching}, belief propagation in Bayesian networks \cite{relaybp,bp-lsd,bp-osd},
MaxSAT \cite{maxsat-decoder}, and neural networks \cite{alpha-qubit,alphaqubit2}.  

To make an informed decision from this array of options, \qec system designers need to evaluate different decoders, assessing the logical error rate they can expect from a decoder once deployed.
\textbf{The goal of our work is to improve the reliability of decoder evaluation.}

Typically, decoder accuracy is estimated via simulation. 
A \qec protocol is simulated with probabilistically injected errors according to a defined error model.
After collecting many simulation shots, the decoder accuracy can be empirically estimated by counting the fraction of shots where a logical error occurs.
This testing approach to decoder evaluation is exemplified by the \stim simulation framework \cite{Gidney2021stimfaststabilizer}, a popular library that supports high-performance, parallelized simulation and includes simple error models.
There are some important drawbacks to the testing strategy. 
\begin{itemize}
\item \textbf{Extremely rare events.} Simple sampling becomes less
effective as devices improve because the number of necessary samples scales inversely with the logical error rate.
A recent estimate suggests that using Shor's algorithm to factor a 2048-bit RSA integer would require a logical
error rate of $\approx10^{-15}$ \cite{gidney2025factor2048bitrsa}. Obtaining a reliable estimate of a quantity this small via simulation would take prohibitively many shots. 
\item \textbf{Uncertainty in modeling.} Models of the noise affecting a quantum computer are subject to uncertainty.
Calibration experiments observe drift in error rates over time, and transient burst events that temporarily spike error rates \cite{9951181,Google_Quantum_AI_and_Collaborators2024-fp,t1-fluctuations}.
Committing to concrete numerical values in simulation fails to capture the range of conditions that 
the decoder can reasonably be expected to operate in. 
\end{itemize}

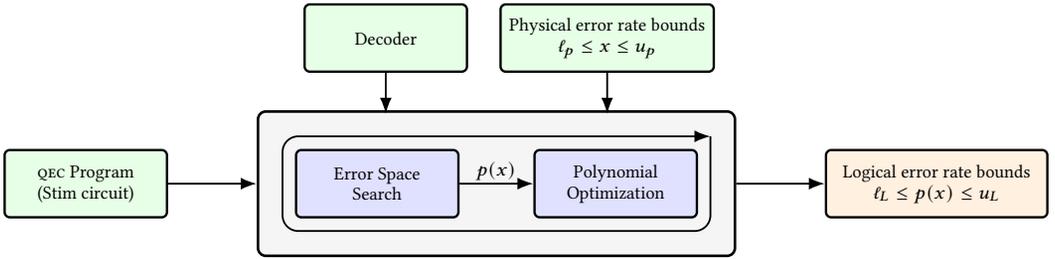
\begin{figure}[t]
\centering
\resizebox{\columnwidth}{!}{%
\begin{tikzpicture}[
  font=\scriptsize,
  arr/.style={-Latex, line width=0.7pt},
  lab/.style={midway, fill=white, inner sep=1.2pt},
  stage/.style={
    draw=black, line width=0.9pt, rounded corners=2pt,
    minimum height=9mm, minimum width=22mm,
    align=center, fill=blue!12
  },
  io/.style={
    draw=black, line width=0.9pt, rounded corners=2pt,
    minimum height=9mm, minimum width=22mm,
    align=center, fill=green!10
  },
  resultbox/.style={
    draw=black, line width=0.9pt, rounded corners=2pt,
    minimum height=9mm, minimum width=30mm,
    align=center, fill=orange!12
  },
  pipe/.style={
    draw=black, line width=1.0pt, rounded corners=3pt,
    inner sep=5mm, fill=gray!8
  }
]

% --- stages
\node[stage] (search) {Error Space\\Search};
\node[stage, right=10mm of search] (opt) {Polynomial\\Optimization};

% --- pipeline container BEHIND the stages
\begin{pgfonlayer}{background}
  \node[pipe, fit=(search) (opt)] (pipeline) {};
\end{pgfonlayer}

% --- p(x) arrow
\draw[arr] (search) -- node[lab,fill=none,above] {$p(x)$} (opt);
% --- perimeter feedback arrow: pipeline NE -> SE (wraps around stages)
\coordinate (pNE) at ($(pipeline.north east)+(-3.5mm,-3.5mm)$);
\coordinate (pNW) at ($(pipeline.north west)+( 3.5mm,-3.5mm)$);
\coordinate (pSW) at ($(pipeline.south west)+( 3.5mm, 3.5mm)$);
\coordinate (pSE) at ($(pipeline.south east)+(-3.5mm, 3.5mm)$);
\coordinate (pME) at ($(pipeline.north east)+(-3.5mm, -4.75mm)$);

\draw[arr, rounded corners=6pt]
  (pNE) -- (pSE) -- (pSW)-- (pNW) -- (pNE);
% --- input: Stim circuit (boxed) on the left (tighter)
\node[io, left=12mm of pipeline.west, yshift=0mm] (stim) {\qec Program \\ (Stim circuit)};
\draw[arr] (stim.east) -- ($(pipeline.west)$);

% --- input: Decoder (boxed) below (not wide)
\node[io, above=5mm of pipeline, xshift=-15mm] (dec) {Decoder};
\draw[arr] (dec.south) -- ($(pipeline.north)+(-15mm,-0.5mm)$);
% --- bounds note near optimize
\node[io, above=5mm of pipeline, xshift=15mm] (bounds) {Physical error rate bounds \\ $\ell_p \leq x \leq u_p$};
\draw[arr] (bounds.south) -- ($(pipeline.north)+(15mm,-0.5mm)$);
% --- output: boxed on the right (tighter)
\node[resultbox, right=12mm of pipeline.east] (res)
  {Logical error rate bounds \\
   $\ell_L \leq p(x) \leq u_L$};
\draw[arr] (pipeline.east) -- (res.west);

\end{tikzpicture}%
}
\caption{An overview of our approach.}
\label{fig:overview}
\end{figure}

\paragraph{Enumerative decoder analysis} To address these shortcomings, we propose a new approach to decoder evaluation, summarized in \zcref{fig:overview}.
Much like \emph{importance sampling}, we decouple the probability of an event from its priority in search.
We systematically enumerate events in the error space, visiting each event at most once and analytically computing its probability of occurrence. 
With this approach, we avoid the problem of wasting shots on high-probability events.  
For example, given realistic physical error rates, the event corresponding to ``no error'' is likely to dominate sampled outcomes.
With any reasonable decoder design, a logical error will not occur in the absence of physical error, so we gain little insight about the difference between decoding strategies from these samples.
For cases where the search space is too large to explore effectively with enumeration, we also present a hybrid approach which uses limited random sampling to estimate the contribution of the unexplored space to the logical error rate.

\paragraph{Quantifying decoder robustness} Our approach also enables analysis of decoder \emph{robustness}: the sensitivity of logical error rate
with respect to perturbations of the physical error rates.
The key insight is that the logical error rate can be represented \emph{symbolically} as a polynomial in the variables which represent physical errors.  
Computing robustness reduces to constrained optimization of the error polynomial.
We solve this constrained optimization problem exactly for the case where each physical error rate is constrained to an interval.
Practical solving is powered by a technique we call \emph{partial derivative pruning}, which can dramatically reduce the size of the search space in real problem instances.
Partial derivative pruning uses an over-approximation to analyze the sign of the partial derivative with respect to a particular variable across the entire search space.
If the sign can be determined, then we can choose to fix the variable to one of the boundaries, cutting the search space in half.

\paragraph{A formally specified input language} To specify the search space for the decoder accuracy problem, we define the notion of a \qec program, which represents the execution of a \qec protocol under a given noise model. 
Our \qec program abstraction formalizes the core of the \stim circuit language which is the input to the \stim simulation framework.
\stim circuits are emerging as a leading compilation target for higher-level descriptions of fault-tolerant quantum computation.
We provide solid foundations for the fault-tolerant quantum software stack with the first formal semantics. 
We also introduce an extension for defining robustness problems called \emph{symbolic} \qec programs. Each symbolic \qec program represents the execution of a \qec protocol under a parametric family of error models to verify decoder robustness against.

\paragraph{Evaluation} We evaluate our approach by estimating the accuracy and robustness of three state-of-the-art decoders on a leading class of \qec codes: rotated surface codes \cite{PhysRevA.86.032324}.
For the Accuracy problem, we compare against \stim simulation and find that our approach can find more precise estimates, especially in low error rate regimes.
For example, on a distance 5 surface code problem instance, our method is able to determine the logical error rate of a decoder within 0.1\% accuracy in about $3\cdot10^6$ shots, whereas sampling requires 100 times as many to confidently estimate the value within an order of magnitude.
\paragraph{Contributions}
In summary, our contributions are the following:

\begin{itemize}
    \item A formal syntax and semantics for \qec programs based on the \stim circuit format
    \item A definition of the decoder logical error rate estimation and robustness problems for \qec programs and a characterization in terms of polynomial optimization
    \item An algorithm for solving the logical error rate estimation and robustness problems based on systematic enumeration of the error space, including an extension which augments enumeration with some limited sampling
    \item A procedure for polynomial optimization over hyperrectangles that enables practical instantiation of our algorithm for a natural class of robustness problems
    \item An empirical evaluation that demonstrates the effectiveness of our algorithm as compared to simulation-based methods
\end{itemize}

\section{Quantum Computing and \qec Fundamentals}
In this section, we summarize the basic necessary background on quantum computing and quantum error correction. 

\paragraph{Qubits} The fundamental unit of quantum computing is called the quantum bit or \emph{qubit}.
Two possible states of a qubit are $\ket{0}$ and $\ket{1}$ which are analogous to the 0 and 1 of a classical bit. 
Unlike classical bits, any complex linear combination $\alpha\ket{0} + \beta\ket{1}$ of these two states is also a possible qubit state, subject to the restriction that $|\alpha|^2 + |\beta|^2 = 1$.
In other words, the state of a qubit is a unit vector in a two-dimensional vector space over $\mathbb{C}$.
A computational basis \emph{measurement} of a qubit in the state $\alpha\ket{0} + \beta\ket{1}$ results in the state $\ket{0}$ with probability $|\alpha|^2$ and $\ket{1}$ with probability $|\beta|^2$.
Multi-qubit systems are a straightforward extension. 
For example, the state of a two-qubit system is an element in a 4-dimensional vector space with basis $\{\ket{00}, \ket{01}, \ket{10}, \ket{11}\}$.
In general, the state of $n$ qubits is a vector in a $2^n$-dimensional vector space.

\paragraph{Density matrices} A density matrix is a structure that allows us to represent a probabilistic ensemble of quantum states.
Such an ensemble may be introduced by a measurement of one qubit of a multi-qubit system where the outcome is unknown. 
For a pure quantum state $\ket{\psi}$ like those we have seen so far, the density matrix $\rho$ is given by $\rho = \ket{\psi}\bra{\psi}$ where $\bra{\psi} = \ket{\psi}^\dagger$, the conjugate transpose.
For an ensemble of quantum states $\ket{\psi_i}$, each with probability $p_i$, $\rho = \sum_i p_i\ket{\psi_i}\bra{\psi_i}.$

\paragraph{Gates and circuits} 
Qubit states are transformed by operations called quantum gates.
Quantum gates are norm-preserving linear maps on the state. 
That is, a quantum gate can be represented as a unitary matrix.
One example is the $\X$ gate, which generalizes the \textsc{not} from classical circuits, mapping $\ket{0}$ to $\ket{1}$ and vice versa.
Another is the $Z$ gate, which applies a relative phase, mapping $\alpha\ket{0} + \beta\ket{1}$ to $\alpha\ket{0} - \beta\ket{1}$.
The $X$ and $Z$ gates are elements of an important class of single qubit gates called \emph{Pauli} gates, which includes $Y = iZX$ as the final member.

Gates can also act on more than one qubit.
A common two-qubit gate is the \cx gate. 
The \cx gate is named for its action on the computational basis states as a ``controlled-X.” If the first argument is $\ket{1}$ the \cx applies an \X to the second argument. 
If it is in the $\ket{0}$ state, the gate has no effect. 
Another example that will be relevant is multi-qubit Pauli operators.
An $n$-qubit Pauli operator is the tensor product of $n$ single-qubit gates which are either Pauli gates or the identity.
Quantum \emph{circuits} are the composition of quantum gates and measurements.

\paragraph{Quantum error correction} Much like classical error-correcting codes, quantum error-correcting codes reduce error rate with redundant encoding.
The state of a single logical qubit is encoded in multiple physical qubits, and measurements on ancilla qubits are used to detect and correct errors without collapsing the logical state.

The simplest example is the three-qubit repetition code, which encodes the state $\alpha\ket{0} + \beta\ket{1}$ as $\alpha\ket{000} + \beta\ket{111}$.
In this code, we detect errors with the quantum analog of classical parity measurements on the first two and last two qubits.
In \qec, such parity measurements are called stabilizer measurements.
Since the only nontrivial coefficients are on basis states where the qubits are all 0 or 1, the outcome of both stabilizer measurements is 0 with probability 1 in the absence of error.
An outcome other than 00 for the two measurements indicates that some qubit has erroneously flipped, which we can model as an unintended \X gate.
The outcomes of stabilizer measurements are called \emph{syndromes} because they are used to infer the presence of errors.
The other key type of parity measurements in a \qec code are \emph{logical observables}, measurement operations on the state of the logical qubit. 
In the three-qubit repetition code, a computational basis measurement of the logical state is equivalent to a computational basis measurement of any physical state (since all qubits are collapsed to the same basis state upon measurement), so we can define the logical observable as any single qubit measurement.

The three-qubit repetition code protects against bit-flips (erroneous $X$ gates), but qubits can also be affected by phase-flips, or erroneous $Z$ gates. 
A full \qec code includes stabilizer measurements to detect phase flips as well.

\section{The Accuracy and Robustness Problems: An Overview}
\label{sec:overview}
In this section, we establish the problem setting for this work, culminating in the definitions of the two decoder evaluation problems we address: Accuracy (\zcref{def:ler}) and Robustness (\zcref{def:rob}).

\paragraph{\qec programs}
We begin by introducing \qec programs, which represent the execution of an error-correction protocol under a particular noise model. 
\qec programs are a simple extension of quantum circuits with probabilistic noise operations and annotations for logical observables and stabilizer measurements.
\zcref{fig:rep-code-circuit} shows a simple example.
In this program, we prepare the logical $\ket{0}$ state in the three-qubit repetition code, perform one round of syndrome measurements,  then measure the logical qubit in the computational basis.
This pattern of preparing a basis state, performing one or more rounds of error-correction, then measuring in that basis is called a \emph{memory experiment}.
Memory experiments are a fundamental test used to validate the effectiveness of a \qec procedure at suppressing errors.

\begin{figure}
\begin{subfigure}[b]{.5\linewidth}
\begin{tabular}{c}
\begin{lstlisting}
// Apply bit-flip noise to qubits (w.p 0.01)
#\statement{Xerr}#(0.01) q[0];
#\statement{Xerr}#(0.01) q[2];
#\statement{Xerr}#(0.01) q[4];

// Perform parity check measurements
cx q[0] q[1];
cx q[2] q[3];
cx q[2] q[1];
cx q[4] q[3];
#\keyword{measure}# m[0] <- q[1];
#\keyword{measure}# m[1] <- q[3];
#\defsynd# m[0];
#\defsynd# m[1];

// Observable measurements, collapse state
#\keyword{measure}# m[2] <- q[0];
#\keyword{measure}# m[3] <- q[2];
#\keyword{measure}# m[4] <- q[4];
#\defobs# m[2];
\end{lstlisting}
\end{tabular}
\caption{\qec program}  
\label{fig:rep-code-circuit}  
\end{subfigure}\hfill
\begin{subfigure}[b]{.5\linewidth}
\small
\centering
\setlength{\tabcolsep}{6pt}
\begin{adjustbox}{max width=\linewidth}
\begin{tabular}{llll}
    \toprule
    Error & Syndrome & Observable & Probability \\
    \midrule
     000 & 00 & 0 & $(0.99)^3$  \\
     111 & 00 & 1 & $(0.01)^3$  \\
     001 & 01 & 0 & $(0.99)^2(0.01)$  \\
     110 & 01 & 1 & $(0.01)^2(0.99)$  \\
     011 & 10 & 0 & $(0.99)(0.01)^2$  \\
     100 & 10 & 1 & $(0.01)(0.99)^2$  \\
     010 & 11 & 0 & $(0.99)(0.01)(0.99)$  \\
     101 & 11 & 1 & $(0.01)(0.99)(0.01)$  \\
    \bottomrule
\end{tabular}
\end{adjustbox}
\caption{Induced probability distribution}
\label{fig:rep-code-dist}
\end{subfigure}
\caption{A \qec program for a three-qubit repetition code memory experiment and its induced probability distribution.
}
\label{fig:rep-code}
\end{figure}

In the example program, we see the four statement types in a \qec program.
\begin{itemize}
    \item \textbf{Standard circuit operations}: gates (e.g., Lines 7-10) and measurement (e.g., Lines 17-19)
    \item \textbf{Error channels}: Lines 2-4 apply \X gates with probability 0.01, a 1\% bit-flip error rate. 
    \item \textbf{Syndrome declarations}: Lines 13 and 14 define syndrome bits. This is the information that the decoder takes as input.
            In general, syndrome bits are defined as the parity of a set of measurement outcomes; here they are the trivial parity over a singleton set.
    \item \textbf{Observable declarations}: Line 20 declares a logical observable, a measurement that we can perform on the state of the logical qubit.
    The output of the decoder is a predicted value for this bit. 
    Like syndrome declarations, the expression here can be the parity of one or more bits of classical data.  
            
\end{itemize}
The semantics of an error correction program is a joint distribution over values of the declared syndrome and observable bits. 
For example, the distribution defined by this program is shown in \zcref{fig:rep-code-dist}. 
In the error column, we represent the applied noise operations as an \emph{error bitstring}. A "1" in index $i$ means the $i$th probabilistic operation is applied.
Note that the syndrome and observable outcomes are deterministic functions of the error.
That is, the error-free execution of a \qec program yields deterministic measurement outcomes, and the probabilistic outcomes are entirely due to probabilistic noise operations.

\paragraph{Observables and decoding} 
In our example program, we (arbitrarily) choose qubit 0 to define the logical observable.
When qubit 0 is affected by an \X error, the result is the state $\alpha\ket{100} + \beta\ket{011}$.
The value of our chosen observable has flipped, but as long as we can detect that a flip has occurred, we can recover our desired measurement operator.
We simply apply a corresponding correction to the measurement, interpreting a result of "1" as "0" and vice versa. 

A \emph{decoder} uses syndrome measurements to predict when an error has occurred that flips a logical observable.
With the explicit probabilities of each syndrome-observable pair in \zcref{fig:rep-code-dist}, we can use the following simple decoding algorithm: 
for each syndrome, compute the probability of the two observable outcomes, and predict the more likely outcome.
For example, given syndrome "00", we predict "0" because $(0.99)^3 > (0.01)^3.$

An incorrect decoder prediction is called a logical error, and the goal of decoding is to minimize the rate of logical errors.
The strategy described in the previous paragraph, called \emph{maximum-likelihood} decoding, is optimal for minimizing logical error rate.
Unfortunately, maximum-likelihood decoding is intractable for large-scale programs;
the general problem of computing the maximum likelihood observable outcome for a given syndrome is \#P-hard \cite{decoding-hardness,Fischer2024hardnessresults}.
Thus, decoders use suboptimal algorithms to achieve acceptable logical error rates in practice.
To evaluate decoders, we are interested in estimating their logical error rate.

\begin{definition}[Accuracy Problem]
    \label{def:ler}
    Given a \qec program \prog with $n$ syndrome bits and $m$ logical observables and a decoder $\decoder : \{0,1\}^n \to \{0,1\}^m$,
    find the logical error rate:
    \[\err(P, d) \coloneq \Pr_{(\synd, \obs) \sim \prog}[\decoder(\synd) \neq \obs].\]  
\end{definition}

\paragraph{Robustness} The program in \zcref{fig:rep-code-circuit} hard-codes a bit-flip probability of 0.01, but modeling quantum devices precisely is challenging, and error rates tend to drift over time \cite{Google_Quantum_AI_and_Collaborators2024-fp,t1-fluctuations}.
To capture this uncertainty, we can define the \emph{symbolic} version of the program that replaces the three concrete probabilities with symbolic variables $x_1, x_2$, and $x_3$.
To represent valid probabilities, we require $0 \leq x_i \leq 1$, but we are typically interested in worst-case performance over a much narrower range of values.
If our prior model estimates the bit-flip rate at 0.01, we might evaluate over the constraining set $0.009 \leq x_i \leq 0.011$ modeling a 10\% uncertainty in physical error rates.
We can quantify the robustness of a decoder over a constraining set by finding the maximum possible logical error rate when the symbolic variables are assigned to concrete values.

\begin{definition}[Robustness Problem]
    \label{def:rob}
    Given a symbolic \qec program $\prog(\vctr{x})$, decoder \decoder, and constraining set \rect,
    find the worst-case logical error rate:
    \[\max_{\vctr{v} \in R}{(\err(\prog(\vctr{v}), \decoder)}).\]
\end{definition}
The Robustness problem is a generalization of the Accuracy problem, with Accuracy equivalent to Robustness with a trivial constraining set where $|R| = 1$.
In \zcref{sec:opt-over-rect}, we describe our procedure for solving the Robustness problem over non-trivial constraining sets. 
In particular, we focus on \emph{hyperrectangles}, which capture the case described above, where each physical error rate is constrained to some independently defined interval.

Since the number of possible syndromes grows exponentially in the size of the logical qubit,
it is not feasible to solve the Accuracy and Robustness problems exactly.
We focus on computing sound bounds that guarantee the quantity of interest lies within some interval.

\section{The Language of \qec Programs}
In this section, we formalize the language of \qec programs, which represent the noisy execution of an error-correction protocol. 
Our language of \qec programs is motivated by the \stim \cite{Gidney2021stimfaststabilizer} circuit simulator; it corresponds to a core subset of the \stim circuit language.
\subsection{Syntax} 
A \qec program is a straight-line sequence of statements generated by the grammar in \zcref{fig:qec-grammar}.
Statements can act on classical and quantum register variables.
The term $q$ denotes a quantum register address, while $\overline{q}$ denotes an ordered list of such addresses.
Analogously, $m$ denotes a classical register address, and $\overline{m}$ an ordered list.
Like a traditional quantum circuit language (e.g. \textsc{openqasm} 2), possible statements include the application of a gate, qubit initialization, and measurement. 
In the statement $U(\overline{q})$, $U$ is a $2^{|\overline{q}|}$ dimensional unitary.
The statement \keyword{reset} $q$ sets the register $q$ to $\ket{0}$ and \keyword{meas} $m \gets q$ measures $q$ in the computational basis and stores the resulting bit in the classical register address $m$.

\qec programs also include statements specific to error correction.
The first type of such statements is an \emph{error channel}.
An error channel applies noise by probabilistically choosing a gate from a set of possible noise operators (which includes the identity $I$).
Typically, error channels are parameterized by a noise strength $p$.
The simplest error channels are the $\statement{Xerr}(p)$ and $\statement{Zerr}(p)$ channels which apply the corresponding gate with probability $p$ and $I$ otherwise. 
In addition to these basic error channels, error models also often include \emph{depolarizing} channels. 
The single qubit depolarizing channel of strength $p$, \statement{Depolarize1}(p), applies a gate from the set $\{X, Y, Z\}$ with probability $p$ (chosen uniformly, so each gate has a probability of $p/3$) and $I$ with probability $(1-p)$.
The two-qubit depolarizing channel $\statement{Depolarize2}(p)$ selects uniformly from the 15 non-identity $2$-qubit Pauli operators with probability $p$ and applies the identity with probability $(1-p)$.

The second type of error-correction statement is \emph{syndrome} and \emph{observable declarations}, which define the parameters of the decoding problem.
Syndrome declarations define a bit of the error syndrome as equal to the parity of an input list of classical registers, while observable declarations define logical observables in the same fashion.
Evaluating a \qec program amounts to probabilistically resolving these declarations to concrete bits. 
We assume, without loss of generality, that measurement statements are in SSA form, so that syndrome and observable declarations can be pushed to the end of the program.

\begin{figure}
  \begin{align*}
      P \coloneq &\ S;D \\
      S \coloneq  &\ \keyword{skip} \mid U(\overline{q}) \mid \keyword{reset } q \mid \keyword{measure } m \gets q \mid \mathit{Err}(\overline{q}) \mid S_1 ; S_2\\
      D \coloneq &\ \keyword{declare-syndrome}(\overline{m}) \mid  \keyword{declare-observable}(\overline{m}) \mid D_1;D_2  \\
      Err \coloneq &\ \statement{Depolarize1}(p) \mid \statement{Depolarize2}(p) \mid \statement{Xerr}(p) \mid \statement{Zerr}(p) \mid \cdots   \\
      p &\in [0,1]
  \end{align*}
  \caption{Syntax of a \qec program}
  \label{fig:qec-grammar}
\end{figure}

\subsection{Semantics}
\label{sec:semantics}
We assign a small-step semantics to a \qec program prefix with no syndrome or observable declarations by integrating ideas from prior definitions of quantum \cite{qwhile} and probabilistic \cite{sem-prob-progs-kozen,sampson2014passert} programs.
Specifically, we define a program configuration $\langle S, \Sigma, \rho, \Gamma \rangle$
where 
\begin{itemize}
  \item $S$ is a \qec program.
  \item $\Sigma$ is a sequence of random draws. 
  We model probabilistic error channels as functions with a random draw argument $\sigma$.
  The argument $\sigma$ is akin to a pseudorandom seed chosen uniformly from a finite set. 
  The applied noise in a \qec program is determined by the draw sequence $\Sigma$. 

  \item $\rho$ is a density matrix representing the current quantum state of the entire register of quantum variables drawn from a finite set $\mathit{QVar}$.
  \item $\Gamma : \mathit{CVar} \to \{0,1\}$ is the state of the classical register variables drawn from a finite set $\mathit{CVar}$.
\end{itemize}

We define a transition relation between program configurations:
\[ \langle  S, \Sigma, \rho, \Gamma \rangle \to \langle  S',\Sigma', \rho', \Gamma \rangle \]
to represent one step of execution.
The inference rules defining this relation are shown in \zcref{fig:smallstep}.
The \ruleref{Reset} and \ruleref{Unitary} rules say that these operations have the appropriate effect on the underlying quantum state.
The notation $U_{\overline{q}}$ in the \ruleref{Unitary} rule refers to the lifting of $U$ to the global state, applying $U$ to the relevant qubits and acting as the identity on others.\footnote{We can write $U_{\overline{q}}$ explicitly as $\Pi_{\overline{q}}(U \otimes I_{\text{dom}(\rho) \setminus \overline{q}})\Pi^{-1}_{\overline{q}}$ where $\Pi_{\overline{q}}$ is a permutation that shifts $\overline{q}$ to the first positions.}
The projection operator $\ket{0}\bra{0}_q$ in the \ruleref{Reset} rule is an analogous lifting.
The \ruleref{Measure} rule updates the quantum state in accordance with the measurement and stores measurement outcome in the classical store $\Gamma$. 
In this rule, $P_{b}$ is one of the projections onto the basis states: $\ket{i}\bra{i}_q$ for $i = 0$ or $i=1$.
Note that this rule allows non-deterministic transition to the state resulting from any possible measurement outcome.
We will return to this point in discussing \emph{well-defined} programs.
Finally, the \ruleref{Err} rule shows how error channels consume a random draw to choose a unitary to apply. 
The choice function $\textit{Err}(\sigma)$ encodes the relative probabilities of different unitaries.
For example, the statement $\statement{Xerr}(0.01)$ is modeled by a function $\textit{Err} : \{1, \ldots ,100\} \to \{I, X\}$ such that $\textit{Err}(1) = X$ and $\textit{Err}(x) = I$ otherwise.

\begin{figure}
\begin{mathpar}
\Rule{SkipSeq}{ }{\langle  \keyword{skip}; S, \Sigma, \rho, \Gamma \rangle \to \langle S, \Sigma, \rho, \Gamma \rangle}
\and
\Rule{Unitary}{\text{dim}(U) = 2^{|\overline{q}|} \and \overline{q} \subseteq \text{dom}(\rho)}{{\langle  U(\overline{q}), \Sigma, \rho, \Gamma \rangle \to \langle \keyword{skip} , \Sigma, U_{\overline{q}} \rho U_{\overline{q}}^{\dag}, \Gamma \rangle}}
\and
\Rule{Reset}{q \in \text{dom}(\rho) }{{\langle  \keyword{reset } q , \Sigma, \rho, \Gamma \rangle \to \langle \keyword{skip} , \Sigma, \ket{0}\bra{0}_q\rho \ket{0}\bra{0}_q + \ket{0}\bra{1}_q\rho \ket{1}\bra{0}_q, \Gamma \rangle}}
\and 
\Rule{Measure}{q \in \text{dom}(\rho) \and \text{Tr}(P_{b} \rho) > 0}{ {\langle  \keyword{measure } m \gets q , \Sigma, \rho, \Gamma \rangle \to \langle \keyword{skip} , \Sigma, \frac{P_{b} \rho P_{b}}{\text{Tr}(P_{b} \rho)}, \Gamma[m \mapsto b] \rangle}}
\and 
\Rule{Err}{\Sigma = \sigma : \Sigma' \and U = \mathit{Err}(\sigma) }{\langle \mathit{Err}(\overline{q}), \Sigma, \rho, \Gamma \rangle  \to \langle \keyword{skip}, \Sigma', U_{\overline{q}} \rho U^{\dag}_{\overline{q}}, \Gamma \rangle  }
\end{mathpar}
\caption{Small-step semantics for \qec program prefixes without syndrome and observable declarations}
\label{fig:smallstep}
\end{figure}

Syndrome and observable declarations denote simple functions on a classical register that take the parity of their argument bits, returning a single bit.
The composition of these statements concatenates the results into a pair of bitstrings: a syndrome bitstring and an observable bitstring.
Thus, we define a denotational semantics $\denote{\cdot}_D : D \times (\textit{CVar} \to \{0,1\}) \to \{0,1\}^* \times \{0,1\}^* $ on $D$, the set of declarations, as shown in \zcref{fig:denote}.

\begin{figure}
  \begin{align*}
    \denote{\keyword{declare-syndrome}(\overline{m})}_D(\Gamma)
    &= \left(\bigoplus_{m \in \overline{m}} \Gamma(m), \varepsilon\right) \\
  \denote{\keyword{declare-observable}(\overline{m})}_D(\Gamma)
    &= \left(\varepsilon, \bigoplus_{m \in \overline{m}} \Gamma(m)\right) \\
  \denote{D_1;D_2}_D(\Gamma)
    &= (s_1 \cdot s_2, o_1 \cdot o_2) \quad \text{where } (s_i, o_i) \coloneq \denote{D_i}_D(\Gamma)
\end{align*}
 \caption{Denotational semantics for declarations. The operator ``$\cdot$" is string concatenation}
 \label{fig:denote}
\end{figure}

\paragraph{Full program semantics} The result of an execution of a \qec program is a pair of bitstrings representing syndrome and observables obtained after evaluating the program body.
We can express this as a big-step judgment (\zcref{fig:bigstep}) combining the semantics for syndrome and observable declarations with the transition rules for other statements from \zcref{fig:smallstep}.
Here, $\rightarrow^*$ is the transitive closure of $\rightarrow$.

\paragraph{Well-defined syndromes and observables} Non-deterministic measurement means that we could write a \qec program such that the \ruleref{FullProg} above allows evaluation to multiple different syndrome-observable pairs. 
However, this behavior does not occur in programs that correspond to correctly constructed \qec codes and experiments.
We say a program that has one possible final evaluation is well-defined (see \zcref{def:well-def-prog}) and restrict our study to well-defined programs.
The restriction to well-defined programs can be enforced statically using a technique called stabilizer tableau simulation \cite{PhysRevA.70.052328,Gottesman:1997zz}.

\begin{definition}
  A \qec program $P$ is well-defined if for every configuration $(\Sigma, \rho, \Gamma)$, there is a unique $(s,o)$
  such that $\langle P, \Sigma, \rho, \Gamma \rangle \Downarrow (s,o)$.
  \label{def:well-def-prog}
\end{definition}

\begin{figure}
  \begin{mathpar}
    \Rule{FullProg}{\langle S, \Sigma, \rho, \Gamma \rangle \to^* \langle \keyword{skip}, \Sigma', \rho', \Gamma' \rangle \and \denote{D}_D(\Gamma') = (s,o)}{\langle S; D, \Sigma, \rho, \Gamma \rangle  \Downarrow (s, o)}
  \end{mathpar}
  \caption{Semantics of a full \qec program}
  \label{fig:bigstep}
\end{figure}

\paragraph{Error bitstrings: identifying executions}
To systematically evaluate decoders, we are interested in isolating particular execution paths of a \qec program. 
For a well-defined program, this amounts to logging which gates are actually applied by error channels.
To this end, we define the notion of an \emph{error bitstring}.
For simplicity, we assume that all error channels are \emph{Bernoulli}. That is, they draw from a set $\{U, I\}$ consisting of a single non-identity gate which is applied with some probability $p$ (depolarizing channels can be decomposed into several Bernoulli channels).
Then, an execution is defined by a bitstring $e$ where $e_i = 1$ represents applying the non-identity gate at the $i$th error channel statement, and $e_i = 0$ represents applying the identity.
The left column of \zcref{fig:rep-code-dist} shows all the error bitstrings for the example program.

\paragraph{Probabilistic semantics}
We now connect the formal semantics of \qec programs to the informal notion from \zcref{sec:overview} that a \qec program defines a distribution over syndrome-observable pairs.
As a stepping stone, we define a program's distribution over error bitstrings. 
Let $\prog$ be a \qec program with error channel statements $E_1, \ldots,  E_n$.
Then we write $e_\Sigma$ to denote the bitstring which results from the draw sequence $\Sigma = \sigma_1, \ldots, \sigma_n.$
That is, $e_\Sigma = e_1, \ldots, e_n$ is defined by the rule 
\[
  e_i = \begin{cases}
      0 &\text{ if } E_i(\sigma_i) = I \\
      1 &\text{otherwise}
  \end{cases}
\]
We say $\Sigma$ \emph{realizes} $e$ if $e_\Sigma = e$. 
With this notation we can express the probability distribution over bitstrings defined by a \qec program $\prog$ and random draw sequence space $\drawseqset$.
The probability of a bitstring is simply the proportion of draw sequences that realize it:

\[f(e; \prog, \drawseqset) = \frac{|\{\Sigma \in \drawseqset  : e_\Sigma = e\}|}{|\drawseqset|}.\]

Since the syndrome and observable outcomes are a deterministic function of the error bitstring, we can extend our distribution over error bitstrings into a distribution over syndrome-observable pairs.
We first leverage our program semantics to define the function from bitstrings to syndrome-observable pairs explicitly.
Given an error bitstring $e$, $\syndfunc(e)$ and $\obsfunc(e)$ denote the syndrome and observable pair that results from the corresponding execution, as formalized in \zcref{def:syndfunc}.

\begin{definition}
  \label{def:syndfunc}
  Let $P$ be a \qec program with $n$ Bernoulli error channel statements, and $e$ be an error bitstring realized by draw sequence $\Sigma_e$.
  We write $\syndfunc(e) = s$ and $\obsfunc(e) = o$ to mean the syndrome and observable such that $\langle P, \Sigma_e, {\ket{0}\bra{0}}^{\otimes n}, \varnothing \rangle  \Downarrow (s, o)$.
\end{definition}

We have a natural probability distribution over syndrome-observable pairs which is the probability of a bitstring which results in the pair:
\[g((s,o); \prog, \drawseqset ) = \sum_{e \in B} \chi_{s,o}(e; \prog)f(e; \prog, \drawseqset), \]
where $B$ is the set of error bitstrings for $\prog$ and $\chi_{s,o}(e; \prog)$ is the indicator function which is 1 when $\syndfunc(e) = s$ and $\obsfunc(e) = o$ and 0 otherwise.

\paragraph{Symbolic \qec programs}
To construct instances of the Decoder Robustness problems, we also need a notion of symbolic \qec program.
In a symbolic program, the concrete probabilities $p$ are replaced by variables $x$.
We write $P(\overline{x})$ for a symbolic program to indicate that each symbolic program defines a function from a vector of probabilities to a \qec program. 
We will sometimes also write $P(\vctr{v})$ for a concrete program to emphasize the perspective as the evaluation of a symbolic program at a point. 

For example, the symbolic version of the repetition code program from \zcref{fig:rep-code-circuit} is shown in \zcref{fig:rep-code-symb}.
This program corresponds to the program-valued function $P(x_1, x_2, x_3)$ and the program  from \zcref{fig:rep-code-circuit} is the evaluation $P(0.01, 0.01, 0.01).$
\begin{figure}
\begin{center}
\begin{tabular}{c}
\begin{lstlisting}
#\statement{Xerr}#(x1) q[0];
#\statement{Xerr}#(x2) q[2];
#\statement{Xerr}#(x3) q[4];
cx q[0] q[1];
cx q[2] q[3];
cx q[2] q[1];
cx q[4] q[3];
#\keyword{measure}# m[0] <- q[1];
#\keyword{measure}# m[1] <- q[3];
#\defsynd# m[0];
#\defsynd# m[1];
#\keyword{measure}# m[2] <- q[0];
#\keyword{measure}# m[3] <- q[2];
#\keyword{measure}# m[4] <- q[4];
#\defobs# m[2];
    \end{lstlisting}
    \end{tabular}
    \caption{A symbolic \qec circuit for a three-qubit repetition code memory experiment.}
    \label{fig:rep-code-symb}
\end{center}
\end{figure}

\section{The Probability Polynomial Lens}
\label{sec:prob-poly}
We now introduce a class of polynomials we call error polynomials that serve as a useful language for analyzing and solving the Accuracy and Robustness problems.
As we will show, the Accuracy Problem reduces to the evaluation of an error polynomial, and the Robustness problem reduces to constrained optimization of error polynomials.

We begin with computing the probability of a particular error bitstring.
We note that we can write the probability $f(e; P, \drawseqset)$ of a bitstring as a simple expression of individual error channel probabilities.
Each error channel statement represents an independent random variable, so the probability of a bitstring is simply the product of the probabilities of the outcomes at each index.
For example, in our three-qubit repetition code program, we have three error channel statements with associated probabilities: $x_1, x_2$, and $x_3$. 
One possible error bitstring is ``010,'' which has probability given by the expression $(1-x_1)(x_2)(1-x_3)$. We refer to this expression as the \emph{error minterm} in analogy to minterms from Boolean algebra.

\begin{definition}
  Given an error bitstring $e = e_1\ldots e_n $, the error minterm $m_e$ is the product
  \[\prod_{e_i = 1} x_i \prod_{e_i = 0}(1-x_i) \]
  over symbolic variables $x_1, \ldots, x_n$.
\end{definition}

Since error bitstrings represent disjoint events, the probability of a \emph{set} of error bitstrings of length $n$ is the sum of the probabilities of each individual bitstring. 
Note that this sum is a polynomial in the variables $x_1, \ldots, x_n$.

\begin{definition}
  Given a set of error bitstrings \errorset, the error polynomial $p_\errorset$ is the sum
  \[\sum_{e \in L} m_e .\]
\end{definition}

\begin{example}
  Consider our running example with the three-qubit repetition code. 
  The maximum-likelihood decoder makes a logical error on the set of bitstrings of Hamming weight at least 2: $\{111, 110, 011, 101\}$.
  The error polynomial corresponding to this set is 
  \[p_L = x_1x_2x_3 + x_1x_2(1-x_3) + (1-x_1)x_2x_3 + x_1(1-x_2)x_3.\] 
  \label{ex:prob-poly}
\end{example}

Observe that the probability of logical error for a decoder $\decoder$ is the probability of a set of error bitstrings, so it can be described by an error polynomial.
Namely, it is the probability of the set of errors for which $\decoder$ makes a logical error.
In \zcref{ex:prob-poly}, we found the polynomial that expresses the logical error rate for the optimal decoder on our running example.
This fact allows us to describe each of our decoder evaluation problems in terms of the polynomial $p_L$.

First, the Accuracy problem is equivalent to evaluation of $p_L$ at a particular point.

\begin{theorem}[Acc. as polynomial evaluation]

    Given a \qec program $\prog(\vctr{v})$ and a decoder $\decoder$,
    \[\err(P, d) = p_L(\vctr{v}), \]
    where $L = \{e  : \decoder(\syndfunc(e)) \neq \obsfunc(e) \}$. 
    \label{thm:ler-is-poly-eval}
\end{theorem}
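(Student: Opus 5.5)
The plan is to unfold the definition of $\err(P,d)$ through the two distributions defined in \zcref{sec:semantics} ($g$ over syndrome--observable pairs and $f$ over error bitstrings), reorganize the sum by error bitstring, and then identify each $f(e;\prog,\drawseqset)$ with the minterm $m_e(\vctr{v})$.

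\textbf{Step 1: write the accuracy as a sum over bitstrings.} Since $P$ is well-defined, the distribution on syndrome--observable pairs is $g$. So
\[
\err(P,d)=\sum_{(s,o)\,:\,d(s)\neq o} g((s,o);\prog,\drawseqset)=\sum_{(s,o)\,:\,d(s)\neq o}\ \sum_{e\in B}\chi_{s,o}(e;\prog)\,f(e;\prog,\drawseqset).
\]
Swap the two finite sums. For each fixed $e$, by \zcref{def:syndfunc} exactly one pair, $(\syndfunc(e),\obsfunc(e))$, has $\chi_{s,o}(e)=1$; uniqueness is exactly well-definedness. Hence the inner sum collapses to the indicator that $d(\syndfunc(e))\neq\obsfunc(e)$, i.e.\ that $e\in L$. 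This gives
\[
\err(P,d)=\sum_{e\in L} f(e;\prog,\drawseqset).
\]

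\textbf{Step 2: show $f(e;\prog,\drawseqset)=m_e(\vctr{v})$.} This is the step that needs the most care, because the paper only informally says that each channel is an independent draw. I would make explicit the natural reading that $\drawseqset=\Omega_1\times\cdots\times\Omega_n$ is a product of finite seed sets, where $\Omega_i$ is the domain of $E_i$. Under that reading:
\begin{itemize}
\item Bernoulli channel $i$ with probability $v_i$ has $|\{\sigma\in\Omega_i: E_i(\sigma)\neq I\}|/|\Omega_i|=v_i$, as in the $\statement{Xerr}(0.01)$ example.
\item The set $\{\Sigma: e_\Sigma=e\}$ factors as a Cartesian product of per-coordinate sets $A_i$. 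Here $A_i$ is the non-identity preimage if $e_i=1$ and the identity preimage if $e_i=0$.
\item Its normalized size is therefore $\prod_i |A_i|/|\Omega_i|=\prod_{e_i=1}v_i\prod_{e_i=0}(1-v_i)=m_e(\vctr{v})$.
\end{itemize}
If $v_i$ is irrational a finite uniform seed set cannot realize it exactly. In that case I would note that the statement is to be read for probabilities that the seed sets represent, or equivalently with $f$ taken as the product measure.

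\textbf{Step 3: combine.} Summing over $L$ gives $\err(P,d)=\sum_{e\in L}m_e(\vctr{v})=p_L(\vctr{v})$, which is the claim.
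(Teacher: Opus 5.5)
Your proposal is correct and follows the paper's route. The paper proves a lemma that $f(e;\prog,\drawseqset)=m_e(\vctr{v})$ by factoring the draw space $\drawseqset=\delta^n$ channel by channel, then rewrites $\err(\prog,\decoder)$ as $\sum_{e\in\errorset} f(e;\prog,\drawseqset)$ and applies the lemma. Your explicit sum-swap through $g$, your per-channel seed sets $\Omega_i$, and your caveat about irrational $v_i$ are more careful versions of steps the paper handles in one line or leaves implicit.
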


Second, the Robustness problem is equivalent to constrained maximization of $p_L$.

\begin{theorem}[Robustness as polynomial maximization]
    Given a symbolic \qec program $\prog(\vctr{x})$, decoder \decoder, and constraining set \rect,
    \[\max_{\vctr{v} \in R}{(\err(\prog(\vctr{v}), \decoder)}) = \max_{\vctr{v} \in R} p_L(\vctr{v}), \]
    where $L = \{e : \decoder(\syndfunc(e)) \neq \obsfunc(e) \}$.
    \label{thm-rob-is-poly-opt}
\end{theorem}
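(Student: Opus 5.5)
The plan is to reduce the statement pointwise to \zcref{thm:ler-is-poly-eval}. Both sides are maxima over the same set $\rect$, so it is enough to show that the two objective functions agree at every $\vctr{v} \in \rect$:
\[\err(\prog(\vctr{v}), \decoder) = p_L(\vctr{v}).\]
Once we have this identity, the two functions $\rect \to [0,1]$ are literally equal. Their maxima then coincide, either both existing or both failing to exist.

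The first step fixes $\vctr{v} \in \rect$ and notes that $\prog(\vctr{v})$ is a concrete \qec program, so \zcref{thm:ler-is-poly-eval} applies to it directly. The second step is to check that the set $L$ in the statement is the set used there. We need $L$ not to depend on $\vctr{v}$, because the theorem invokes \zcref{thm:ler-is-poly-eval} with the same $L$ at every point.

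For this, observe that $\syndfunc(e)$ and $\obsfunc(e)$ are defined in \zcref{def:syndfunc} by running $\prog$ on a draw sequence realizing $e$. Once $e$ is fixed, which unitary each error channel applies is determined: $U$ if $e_i=1$ and $I$ if $e_i=0$. The probabilities $\vctr{v}$ affect only which draw sequences realize $e$, not the gates applied along that execution. Hence the final classical store, and with it $(\syndfunc(e),\obsfunc(e))$, is the same for every instantiation $\prog(\vctr{v})$ under which $e$ is realizable. This uses the well-definedness assumption to get a unique output.

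The main obstacle is a corner case: some $v_i \in \{0,1\}$, so that certain bitstrings have no realizing draw sequence. I would handle it by defining $\syndfunc,\obsfunc$ on the symbolic program by the gate choices directly. The execution is then specified for every $e$ regardless of $\vctr{v}$. Such unrealizable $e$ contribute $m_e(\vctr{v})=0$ to $p_L(\vctr{v})$, so including them in $L$ is harmless.

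The last step is the pointwise identity itself, which \zcref{thm:ler-is-poly-eval} supplies: $\err(\prog(\vctr{v}),\decoder)=p_L(\vctr{v})$. Taking $\max_{\vctr{v}\in\rect}$ of both sides finishes the proof.
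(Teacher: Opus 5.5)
Your proposal is correct and takes the same approach as the paper, whose proof is the one line that the result follows directly from \zcref{thm:ler-is-poly-eval} applied at each point of $\rect$. Your extra checks fill in details the paper leaves implicit: that $L$ does not depend on $\vctr{v}$, and that degenerate values $v_i \in \{0,1\}$ are harmless because the corresponding minterms vanish.
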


\paragraph{Bounding error polynomials}
Of course, the polynomial formulation of these problems does not circumvent their fundamental intractability. 
The polynomial $p_L$ has exponentially many terms, so even evaluating it at a point is not necessarily feasible. 
Therefore, we derive bounds on $p_L$ that we can compute without constructing the entire polynomial, using only terms from a set $\seen$ of explored error bitstrings.
Our bounds rely on the following theorem, which essentially says that the probability of a set $\errorset$ is lower-bounded by the probability of elements of $\errorset$ we have seen,
and upper-bounded by the worst-case assumption that all bitstrings not already confirmed to be in $\errorset^C$ are actually in $\errorset$.
\begin{theorem}
  Given any two sets of length $n$ bitstrings \errorset and \seen, we have the following inequalities over the set $[0,1]^n$:
  \[
    p_{\errorset \cap \seen} \leq p_{\errorset} \leq 1-p_{\seen \setminus \errorset}.
  \]
  \label{thm:bounds}
\end{theorem}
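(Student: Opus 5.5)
The argument rests on two facts about error minterms on $[0,1]^n$: each minterm $m_e$ is nonnegative, and the minterms sum to one.

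\textbf{Nonnegativity.} For $\vctr{v} \in [0,1]^n$, every factor $v_i$ and $1-v_i$ lies in $[0,1]$. Hence $m_e(\vctr{v}) \geq 0$ for every bitstring $e$.

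\textbf{Partition of unity.} Let $B = \{0,1\}^n$. Then $\sum_{e \in B} m_e = \prod_{i=1}^n \bigl(x_i + (1-x_i)\bigr) = 1$. This follows by expanding the product: choosing $x_i$ or $1-x_i$ in each factor corresponds exactly to choosing $e_i = 1$ or $e_i = 0$. A short induction on $n$ makes this rigorous. It is the only step needing any care, and it is routine.

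\textbf{Additivity.} Since $p_{\errorset} = \sum_{e \in \errorset} m_e$, we have $p_{A \cup C} = p_A + p_C$ whenever $A$ and $C$ are disjoint. Consequently $p_{A} \leq p_{C}$ pointwise on $[0,1]^n$ whenever $A \subseteq C$, because the difference is $p_{C\setminus A}$, a sum of nonnegative minterms. Combined with the partition of unity, this also gives $p_{B \setminus A} = 1 - p_A$.

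With these facts both inequalities follow directly.

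\textbf{Lower bound.} Since $\errorset \cap \seen \subseteq \errorset$, monotonicity gives $p_{\errorset \cap \seen} \leq p_{\errorset}$.

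\textbf{Upper bound.} The set $\seen \setminus \errorset$ is disjoint from $\errorset$, so $\errorset \subseteq B \setminus (\seen \setminus \errorset)$. Monotonicity and the complement identity then give
\[
p_{\errorset} \leq p_{B \setminus (\seen\setminus\errorset)} = 1 - p_{\seen \setminus \errorset}.
\]

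No real obstacle is expected. The only point to state explicitly is that these inequalities hold pointwise over $[0,1]^n$, where nonnegativity of minterms is guaranteed. Outside the cube, the polynomials can take negative values, and the monotonicity argument fails.
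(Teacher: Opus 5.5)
Your proof is correct and follows essentially the same route as the paper. The paper uses nonnegativity of minterms on $[0,1]^n$ to get the lower bound from $p_L = p_{L\cap S} + p_{L\cap S^c}$, and gets the upper bound from $p_L = 1 - p_{L^c}$ with $p_{L^c} \geq p_{S\setminus L}$; your version is slightly more careful, since you justify $\sum_{e} m_e = 1$ (which the paper uses without proof) and state nonnegativity rather than the paper's ``positive.''
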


\begin{corollary}
Given any two sets of length $n$ bitstrings \errorset and \seen and a constraining set $\rect \subseteq [0,1]^n$, we have the following inequalities:
\[  \max_{\vctr{x} \in R}(p_{\errorset \cap \seen}(\vctr{x})) \leq  \max_{\vctr{x} \in R}( p_{\errorset}(\vctr{x})) \leq 1 - \min_{\vctr{x} \in R} (p_{\seen \setminus \errorset}(\vctr{x})).\]
\label{thm:rob-bounds}
\end{corollary}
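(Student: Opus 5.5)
The corollary follows from \zcref{thm:bounds} together with the elementary fact that pointwise inequalities between functions on a common domain are preserved under taking the maximum. First I would check that the three extrema exist. Each of $p_{\errorset \cap \seen}$, $p_{\errorset}$ and $p_{\seen\setminus\errorset}$ is a polynomial, so it is continuous. Assuming $\rect$ is compact and nonempty, which holds for the hyperrectangles of interest and trivially for finite sets, all the maxima and minima are attained. If $\rect$ is not closed, the same argument works verbatim with $\sup$ and $\inf$ in place of $\max$ and $\min$.

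For the left inequality, let $\vctr{x}^* \in \rect$ be a maximizer of $p_{\errorset\cap\seen}$. Since $\rect \subseteq [0,1]^n$, \zcref{thm:bounds} gives $p_{\errorset\cap\seen}(\vctr{x}^*) \le p_{\errorset}(\vctr{x}^*)$. The right-hand side is at most $\max_{\vctr{x}\in\rect} p_{\errorset}(\vctr{x})$, which proves the left inequality.

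For the right inequality, let $\vctr{y}^*\in\rect$ be a maximizer of $p_{\errorset}$. By \zcref{thm:bounds} at $\vctr{y}^*$,
\[
p_{\errorset}(\vctr{y}^*) \le 1 - p_{\seen\setminus\errorset}(\vctr{y}^*).
\]
Since $p_{\seen\setminus\errorset}(\vctr{y}^*) \ge \min_{\vctr{x}\in\rect} p_{\seen\setminus\errorset}(\vctr{x})$, the right-hand side is at most $1 - \min_{\vctr{x}\in\rect} p_{\seen\setminus\errorset}(\vctr{x})$. Equivalently, $1 - p_{\seen\setminus\errorset}$ is bounded above by $1 - \min p_{\seen\setminus\errorset}$ everywhere on $\rect$, so it suffices to compare the maximum of $p_{\errorset}$ with that bound.

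None of these steps is a real obstacle; the content is all in \zcref{thm:bounds}. The one point to handle explicitly is the existence of the extrema, which is why I would state the compactness (or $\sup$/$\inf$) caveat at the start.
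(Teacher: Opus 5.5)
Your proof is correct and follows the paper's route. The paper also applies \zcref{thm:bounds} pointwise on $\rect \subseteq [0,1]^n$ and then passes to extrema via $\max_{\vctr{x}\in\rect}(1-p_{\seen\setminus\errorset}(\vctr{x})) = 1-\min_{\vctr{x}\in\rect} p_{\seen\setminus\errorset}(\vctr{x})$, which is exactly your closing observation; your remark about attainment of the extrema (compactness, or replacing $\max/\min$ by $\sup/\inf$) is a harmless extra that the paper leaves implicit.
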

\section{Our Estimation Algorithm}
Our main algorithm systematically explores the space of error bitstrings and leverages the results from \zcref{sec:prob-poly} to iteratively refine bounds on the solution to either the Accuracy or Robustness problem.
We show the general Robustness version in \zcref{alg:rob-bound}. 

The algorithm iterates over the set of error bitstrings until exhausting all of them, or it is interrupted (most programs define a space of error bitstrings far too large to enumerate entirely).
In \zcref{sec:iter-strat}, we discuss choices for the order of iteration.
Each time a new bitstring is enumerated, we add it to the set of seen bitstrings (line 5).
Then, we convert it to a syndrome and observable pair, run the decoder on the syndrome, and check for a logical error (line 6).
If the result is a logical error, we add the bitstring to the set \errorset (line 7).
Finally, new bounds are derived for the solution to the Robustness problem instance via constrained polynomial optimization (lines 8 and 9).

In the special case of the Accuracy problem, the set $\rect$ is a singleton and the optimization in lines 8 and 9 reduces to evaluating the polynomials at a single point.
For the general Robustness problem, we must solve an optimization problem over an infinite domain.
In \zcref{sec:opt-over-rect}, we describe our approach for exact optimization over infinite constraining sets defined in terms of upper and lower bounds for each variable.

\begin{algorithm}
\begin{algorithmic}[1]
  \Procedure{bound}{$\prog(\vctr{x})$ : symbolic \qec program, $\decoder$ : decoder, $\rect \subseteq [0,1]^n$}
\State Let $B$ be the set of error bitstrings for $P$
\State Initialize empty sets $\errorset$ and $\seen$ of error bitstrings
\For{$e \in B$} \Comment{See \zcref{sec:iter-strat}} 
  \State $\seen.add(e)$
  \If{$\decoder(\syndfunc(e)) \neq \obsfunc(e)$}
    \State $\errorset.add(e)$
  \EndIf
  \State \textit{lower} $\gets \max_{x \in R}(p_{\errorset}(\vctr{x}))$ \Comment{See \zcref{sec:opt-over-rect}}
  \State \textit{upper} $\gets 1 - \min_{x \in R} (p_{\seen \setminus \errorset}(\vctr{x}))$
\EndFor
\State \Return (\textit{lower}, \textit{upper})
\EndProcedure
\end{algorithmic}

\caption{Our algorithm for sound bounds on the decoder robustness}
\label{alg:rob-bound}
\end{algorithm}

\subsection{Optimization over Hyperrectangles}
\label{sec:opt-over-rect}
In this subsection, we will develop a procedure for maximizing polynomials over a particular type of constraining set, a \emph{hyperrectangle}.
This procedure will enable solving the Robustness problem for the case where we define an interval of possible values for each physical error rate. 
We present our approach as a sequence of refinements. First we show the problem is decidable in exponential time. 
Then, we apply pruning and simplification techniques to reduce the search space  without sacrificing exact optimization. 
Our efforts culminate in \zcref{alg:poly-opt}, our full polynomial optimization algorithm.

\begin{definition}
  A hyperrectangle \rect is a subset of $\mathbb{R}^n$ described by a pair of vectors $\ell, u \in \mathbb{R}^n$ such that $x \in \rect\iff \ell \leq x \leq u$.
\end{definition}

The first key observation is that, though the hyperrectangle contains infinitely many points, 
extrema lie on one of finitely many vertices: points where each variable is set to its upper or lower bound. 

\begin{theorem}
  Let $p$ be an error polynomial over $n$ variables and \rect be a hyperrectangle with bounds $(\ell, u)$. Then, $p$ achieves its extreme values over \rect at a vertex $x^*$ such that $x^*_i \in \{\ell_i, u_i\}$
  for all $i \in \{1, \ldots, n\}$.
  \label{thm:opt-at-corner}
\end{theorem}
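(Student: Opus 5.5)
The key fact is that an error polynomial is \emph{multilinear}: it is affine in each variable $x_i$ when the others are held fixed. We will prove this first and then push an arbitrary maximizer (or minimizer) to a vertex one coordinate at a time.

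\textbf{Step 1 (multilinearity).} Each error minterm $m_e=\prod_{e_j=1}x_j\prod_{e_j=0}(1-x_j)$ contains exactly one factor involving $x_i$, namely $x_i$ or $(1-x_i)$. Hence $m_e$ is affine in $x_i$, and the same holds for any sum $p_\errorset=\sum_{e\in\errorset}m_e$. Concretely, for each $i$ we can write
\[
p(\vctr{x}) = x_i\, a_i(\vctr{x}_{-i}) + b_i(\vctr{x}_{-i}),
\]
where $a_i$ and $b_i$ do not depend on $x_i$.

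\textbf{Step 2 (existence of an extremum).} The hyperrectangle $\rect$ is compact (we assume $\ell\le u$ so it is nonempty), and $p$ is continuous. So $p$ attains a maximum at some point $\vctr{y}\in\rect$.

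\textbf{Step 3 (moving coordinates to the boundary).} We process $i=1,\dots,n$ in turn. Fix all coordinates of $\vctr{y}$ except the $i$th. The restriction $t\mapsto t\,a_i+b_i$ on $[\ell_i,u_i]$ is affine, so its maximum over the interval is attained at an endpoint.
\begin{itemize}
\item If $a_i\ge 0$, set $y_i\gets u_i$.
\item If $a_i<0$, set $y_i\gets \ell_i$.
\end{itemize}
Either way the value of $p$ does not decrease. Since $\vctr{y}$ was already a maximizer, the value stays equal to the maximum. Coordinates already set to endpoints are never touched again, so after $n$ steps we reach a vertex $x^*$ with $x^*_i\in\{\ell_i,u_i\}$ and $p(x^*)=\max_\rect p$.

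\textbf{Step 4 (minimum).} The minimum follows by the same argument with the inequalities reversed, or by applying Steps 2 and 3 to $-p$, which is also multilinear.

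The only point needing care is Step 3. The maximum of $p$ may also be attained at interior points, for example when some $a_i=0$. So the statement must be read as ``some extremizer is a vertex,'' not ``every extremizer is a vertex.'' The coordinate-wise replacement argument gives exactly the former.

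The argument also does not need $\rect\subseteq[0,1]^n$; multilinearity is all that is used.
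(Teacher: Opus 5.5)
Your proof is correct and follows the same route as the paper's: both write $p$ as affine in a single coordinate (the paper as $a(1-x_i)+b\,x_i$, you as $x_i a_i + b_i$) and move one non-vertex coordinate at a time to whichever endpoint does not decrease $p$, repeating until a vertex is reached. Your version is slightly more careful than the paper's. You invoke compactness to get a maximizer and treat the minimum explicitly via $-p$, whereas the paper only argues that every value attained in $R$ is matched at some vertex, which, since there are finitely many vertices, gives the same conclusion.
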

Intuitively, extreme points lie on vertices because an error polynomial is a multilinear function,
and a linear function achieves its extrema on a bounded interval at the endpoints.
This theorem yields an immediate exponential time algorithm for the optimization problem via exhaustive search over the vertices. 
Given the general problem of optimizing a multilinear polynomial is \textsc{np}-hard, we do not expect a polynomial time algorithm to exist.
However, we can do better than brute-force enumeration in practice by applying additional pruning of the discrete search over vertices.
\paragraph{Partial derivative pruning}
To avoid evaluating an error polynomial at every vertex, we apply a strategy analogous to branch-and-bound 
or DPLL.
We iterate over each variable $x_i$ individually and attempt to prove that the optimal point must lie on one of the two planes $x_i = \ell_i$ and $x_i = u_i$, cutting the remaining search space in half. 
To obtain such a proof, we need to determine the sign of the partial derivative $\frac{\partial p}{\partial x_i}.$

\begin{example}
Suppose we wish to \emph{maximize} the error polynomial \[p = x_1(1-x_2) + (1-x_1)x_2\] over the constraining set $0.009 \leq x_i \leq 0.011$.
We have that $\frac{\partial p}{\partial x_1} = (1-x_2) - x_2$, which is positive whenever $x_2 < 0.5$,
which is true of our entire constraining set. 
Therefore, $p$ is maximized by setting $x_1 = 0.011$. 
\label{ex:two-var}
\end{example}

In \zcref{ex:two-var}, determining the sign of the partial derivative was trivial because it is a single variable function.
In general, this step amounts to solving another polynomial maximization problem, just with one fewer variable. 
We relax the problem by introducing the possibility that we are unable to determine the sign, and the variable is left unassigned.
We compute sound bounds by optimizing minterms individually, since for a polynomial $p$ written as a sum  $\sum_{i=1}^k m_i$, we know
\begin{align*}
  &\min p = \min \sum_{i=1}^k m_i \geq \sum_{i=1}^k \min m_i \qquad \text{and} &\max p = \max \sum_{i=1}^k m_i \leq \sum_{i=1}^k \max m_i. 
\end{align*}

Optimizing an individual minterm is trivial, regardless of the total number of variables, because each variable appears exactly once either as an $x_i$ factor or a $(1-x_i)$ factor.
We can certify the sign of the partial derivative if the entire interval $[\sum_{i=1}^k \min m_i, \sum_{i=1}^k \max m_i ]$ has the same sign.
\begin{example}
  Applying partial derivative pruning to the polynomial $p = x_1(1-x_2) + (1-x_1)x_2$ from \zcref{ex:two-var}, we have the following bounds on $\frac{\partial p}{\partial x_1}$: 
\begin{align*}
  & \min \frac{\partial p}{\partial x_1} \geq  \min (1-x_2) + \min (-x_2) = (1-0.011) + (-0.011) = 0.978,  \\
  & \max \frac{\partial p}{\partial x_1} \leq  \max (1-x_2) + \max (-x_2)   = (1-0.009) + (-0.009) = 0.982.
\end{align*}
Since the entire interval $[0.978, 0.982]$ is positive, we can conclude that $x_1 = 0.011$ maximizes $p$.
\label{ex:bound-terms-indv}
\end{example}

\begin{example}
As an example where we cannot determine the sign of the partial derivative, suppose we wish to maximize the polynomial
\[p = x_1(1-x_2)x_3 + (1-x_1)x_2(1-x_3).\]

We have $\frac{\partial p}{\partial x_1}  = (1-x_2)x_3 - x_2(1-x_3)$. Bounding terms individually:

\begin{align*}
 & \min \frac{\partial p}{\partial x_1} \geq  \min ((1-x_2)x_3) + \min (-x_2(1-x_3)) = -0.002, \\
& \max \frac{\partial p}{\partial x_1} \leq  \max ((1-x_2)x_3) + \max (-x_2(1-x_3)) = 0.002.
\end{align*}
The interval $[-0.002, 0.002]$ contains both positive and negative values, so we cannot determine the sign of $\frac{\partial p}{\partial x_1}$, and fail to prune the search space. 
\end{example}

\paragraph{Matching term simplification}
An important optimization to make partial derivative pruning more precise is the simplification of \emph{matching terms} in the partial derivative.
Given a particular target index $x_i$, note that two terms that differ only at $x_i$ contribute the same magnitude to $\frac{\partial p}{\partial x_i}$ with opposite signs, cancelling each other out.
Before individually maximizing each term, we simplify algebraically, removing matching terms to improve the likelihood that partial derivative pruning will succeed.

\begin{example}
This example demonstrates how matching term simplification improves precision.
Suppose we wish to maximize the polynomial
\[p = x_1(1-x_2)x_3 + (1-x_1)(1-x_2)x_3 + x_1x_2x_3.\]
First, we attempt to apply partial derivative pruning without simplification.
We compute
\[\frac{\partial p}{\partial x_1} = (1-x_2)x_3 + -(1-x_2)x_3 + x_2x_3. \]
Then, we optimize individual terms:
\begin{align*}
  & \min \frac{\partial p}{\partial x_1} \geq  \min ((1-x_2)x_3) + \min (-(1-x_2)x_3) +  \min x_2x_3 = -1.9\cdot10^{-3},  \\
  & \max \frac{\partial p}{\partial x_1} \leq  \max  ((1-x_2)x_3) + \max (-(1-x_2)x_3) + \max x_2x_3 = 2.1\cdot10^{-3}.
\end{align*}
The interval $[ -1.9\cdot10^{-3}, 2.1\cdot10^{-3}]$ contains both positive and negative values, so we cannot determine the sign of $\frac{\partial p}{\partial x_1}$, and fail to prune the search space. 
On the other hand, if we simplify to $\frac{\partial p}{\partial x_1} = x_2x_3$, then we see
\begin{align*}
  & \min \frac{\partial p}{\partial x_1} \geq  \min x_2x_3 = 8.1\cdot10^{-5} \qquad \text{and}  
  & \max \frac{\partial p}{\partial x_1} \leq  \max x_2x_3 = 1.21\cdot10^{-4}.
\end{align*}
We thus conclude fixing $x_1 = 0.011$ maximizes $p$.
\end{example}

Our full polynomial maximization algorithm is shown in \zcref{alg:poly-opt}.
We iterate over each variable, applying partial derivative pruning in an attempt to fix each to one of the two bounds.
Then, we perform an exhaustive search over the remaining search space.

\begin{algorithm}[t]
\begin{algorithmic}[1]
  \Procedure{maximize}{$p$ : error polynomial, $\ell$, $u$}
\State Let $n$ be the number of variables in $p$
\State Initialize a solution vector $s_1, \ldots, s_n$
\For{$i$ in \{1, \ldots, n\}}
\State Let $p'$ be the simplified partial derivative of $p$ with respect to $x_i$.
\State $(\textit{lower}, \textit{upper}) \gets \textsc{bound-terms-individually}(p', \ell, u)$ \Comment{Procedure from \zcref{ex:bound-terms-indv}}
\If{\textit{lower} > 0}
\State Substitute $u_i$ for $x_i$ in $p$
\State $s_i \gets u_i$
\ElsIf{\textit{upper} < 0}
\State Substitute $\ell_i$ for $x_i$ in $p$
\State $s_i \gets \ell_i$
\EndIf
\EndFor
\State Let $j_1, \ldots, j_k$ be indices of remaining free variables
\State Find $s_{j_1}, \ldots, s_{j_k}$ which maximize $p$ by exhaustive enumeration
\State \Return $s$
\EndProcedure
\end{algorithmic}
\caption{Branching optimization of an error polynomial}
\label{alg:poly-opt}
\end{algorithm}

\begin{theorem}
  \zcref{alg:poly-opt} returns a maximum point for the given error polynomial over the given hyperrectangle.
  \label{thm:alg-correct}
\end{theorem}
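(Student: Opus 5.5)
We argue by maintaining an invariant across the loop of \zcref{alg:poly-opt}. Write $\rect_0 = \rect$, and let $\rect_i$ be the hyperrectangle obtained from $\rect_{i-1}$ by collapsing coordinate $i$ to $\{u_i\}$ or $\{\ell_i\}$ if the algorithm fixed $x_i$, and leaving it unchanged otherwise. The invariant to establish is
\[
\max_{\vctr{x}\in \rect_i} p(\vctr{x}) \;=\; \max_{\vctr{x}\in\rect} p(\vctr{x}).
\]
Equivalently, the current substituted polynomial, viewed as a function of its remaining variables over the remaining box, has the same maximum as the original. Once the invariant holds at $i=n$, the final exhaustive enumeration ranges over the vertices of $\rect_n$. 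By \zcref{thm:opt-at-corner}, the maximum of $p$ over $\rect_n$ is attained at one of these vertices. Note that $p$ restricted to a face is still multilinear, and the vertex argument only uses multilinearity. Hence the returned $s$ is a maximizer over $\rect$.

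For the inductive step, suppose the algorithm fixes $x_i=u_i$ because \emph{lower}$>0$. We need two facts.

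\emph{(a) The simplified derivative is correct.} Matching-term simplification only deletes pairs of terms that cancel identically, so $p'$ equals $\partial p/\partial x_i$ as a polynomial. Here $p$ is the current substituted polynomial.

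\emph{(b) The bound is sound.} Every term of $p'$ is $\pm$ a product in which each remaining variable appears as $x_j$ or $(1-x_j)$. Each such factor is nonnegative on $[0,1]$ and monotone in $x_j$. Hence the minimum of each term over the box is attained at an explicit vertex and equals what \textsc{bound-terms-individually} computes. The subadditivity inequality $\min\sum m_k \ge \sum \min m_k$ then gives $\partial p/\partial x_i > 0$ on all of $\rect_{i-1}$.

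Since $p$ is affine in $x_i$ (multilinearity), the positive derivative gives $p(\dots,x_i,\dots)\le p(\dots,u_i,\dots)$ pointwise for every choice of the other coordinates. Thus restricting to $x_i=u_i$ loses no maximum. The case \emph{upper}$<0$ is symmetric, and if neither condition holds the box is unchanged.

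The main obstacle is the bookkeeping in (a) and (b). After earlier substitutions, $p$ is no longer literally an error polynomial: some factors have become constants $u_j$ or $1-\ell_j$. We must check that the per-term bounding and \zcref{thm:opt-at-corner} still apply. I would handle this by generalizing the statements to the class of sums of scalar multiples of products of distinct factors $x_j$ or $(1-x_j)$. This class is closed under substitution and partial differentiation and consists of multilinear functions. The corner theorem and the termwise bounds hold for this class by the same arguments, using that $\rect\subseteq[0,1]^n$ so each factor is nonnegative. The only care needed is that a negative scalar reverses which vertex attains the term's minimum.
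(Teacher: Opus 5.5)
Your proof is correct and takes essentially the same approach as the paper. A certified sign of $\partial p/\partial x_i$, together with multilinearity, lets each fixed variable be set to an endpoint without losing the maximum, and the final exhaustive search over vertices handles the remaining free coordinates. Your invariant/pointwise-dominance formulation is slightly more careful than the paper's exchange argument on a globally optimal vertex $v^*$: it makes explicit that each sign is certified only on the face cut out by earlier substitutions (the paper's ``over the entire set'' needs an implicit induction in the order variables are fixed), and it checks that the termwise bounds and the corner theorem still apply to the substituted polynomials.
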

\subsection{Enumeration Strategies}
\label{sec:iter-strat}
The pseudocode in \zcref{alg:rob-bound} leaves the order of traversal of the error bitstring space unspecified. 
In this subsection, we describe the strategies that we consider. 
We start with our simple core technique of Hamming weight order traversal, then two orthogonal extensions.

\paragraph{Hamming weight order} Our core strategy is to enumerate error bitstrings from least to highest Hamming weight. 
The probability of an error bitstring decays exponentially in the number of errors, so most of the probability mass is concentrated on low-weight bitstrings.
Therefore, a lower weight bitstring provides more information.
If a low-weight bitstring causes a logical error, it improves lower bounds significantly. 
If it does not, our upper bounds improve significantly. 

\paragraph{Split-search} The \emph{split-search} extension uses parallelism to discover the first logical errors more quickly. 
Note that we can parallelize the enumeration of error bitstrings, which introduces a choice of how to partition the search space among different enumeration threads.
Given $k$ workers, the most basic strategy is to assign worker $i$ to bitstrings whose position in the enumeration order is equal to $i$ modulo $k$. 
Another option is to distribute the workers across different regions of the search space. 
Theoretically, for an error-correction code with distance $d$ and optimal decoder, all logical errors should result from error bitstrings with weight greater than or equal to $d/2$. 
Therefore, enumeration of errors below this weight can improve upper-bounds but is unlikely to improve lower bounds.
In the \emph{split-search} strategy, we designate half of the workers to start enumeration at weight 0 and half to start enumeration at weight $\lfloor d/2 \rfloor + 1,$ where $d$ is a provided ansatz of the code distance. 

\paragraph{Local search} The local search extension is based on the intuition that small perturbations of a logical error bitstring are likely to also result in a logical error.
The local search strategy is \emph{adaptive}. 
Each time a logical-error-inducing bitstring is discovered, the pre-planned  Hamming weight order enumeration is paused, while neighbors reached by applying perturbations are explored.
The goal is to improve lower bounds more rapidly under the hypothesis that these related bitstrings will also result in a logical error.
Once the neighbor set is exhausted, the Hamming weight order search resumes.  
Local search is parameterized by a set of local moves, which define which neighbor strings to consider.
In this work, we evaluate two local move sets (and their union). 

The first local move set is \textsc{flip}, which includes the flipping of any individual position in a bitstring.
The motivation for this move set is that logical errors are caused by clusters of physical errors, and isolated changes elsewhere are likely to result in a new logical error bitstring. 

The second local move set is \textsc{shift}, which includes all circular shifts of a bitstring.
The shift operator $\tau$ shifts all bits right by one position (modulo the bitstring length): 
$$\tau(e_1\ldots e_n) = e_ne_1\ldots e_{n-1}.$$
The local moves in the set \textsc{shift} are the $n-1$ offsets $\tau, \tau^2, \ldots, \tau^{n-1}.$ 
Here, the motivation is the translational symmetry of error correction codes. 
\qec programs typically contain many copies of the same circuit applied to different sets of qubits.
The goal of shifting a bitstring is to reach a bitstring that corresponds to a translation of the same physical errors from one copy to another.
If a set of physical errors causes a logical error in the first, it likely does the same in the other.

\section{Combining Sampling and Enumeration}
\label{sec:sampling-and-enumeration}

In this section, we describe a technique for integrating the strengths of a sampling-based approach into our enumeration algorithm for the Accuracy problem.
Enumeration alone can struggle with large-scale \qec programs that exhibit a long tail of low probability bitstrings.
As a result, the enumerative search may leave a large portion of the probability mass unexplored, so the quantity $p_{S \setminus L}(\vctr{v})$ is relatively large.

In \zcref{thm:bounds}, we get an unconditionally sound lower bound by making the maximally conservative assumption that no unseen error bitstring corresponds to a logical error.
Likewise, our upper bound is obtained by assuming they all do.
However, these bounds can be quite loose.
In solving the Accuracy problem, we can get a more accurate estimate of the contribution of still unenumerated errors to the logical error rate via sampling.
Sampling error bitstrings from the unseen set and checking whether they cause a logical error can yield an empirical estimate of the probability that an unseen error causes a logical error.

Suppose we are solving some Accuracy problem with error bitstrings of length $n$.
Let $S \subset \{0,1\}^n$  be a subset of enumerated error bitstrings and let $L \subset  \{0,1\}^n$ be error bitstrings which cause a logical error.
Then we can decompose the logical error rate into the contribution from elements of $S$ and those outside of $S$:
\begin{align*}
  \textrm{Pr}_{(\synd, \obs) \sim \prog}[\decoder(\synd) \neq \obs] = p_L(v) = p_{L \cap S}(v) + p_{L \cap S^c}(v)
\end{align*}
\paragraph{Estimating unenumerated contribution} The problem is that we cannot evaluate $p_{L \cap S^c}(v)$ directly because we do not know the elements of $L \cap S^c$. 
Instead, we apply statistical estimation to the value  $p_{L \cap S^c}(v)$.
To this end, we rewrite $p_{L \cap S^c}(v)$ as an expected value:
\[p_{L \cap S^c}(v) = \sum_{e \in S^c} \chi_L(e) m_e(v) = p_{S^c}(v) \cdot \mathbb{E}_{e \sim \prog \mid e \in S^c}[\chi_L(e)],\] 
where $\chi_L$ is the indicator function which is 1 on inputs in $L$ and 0 elsewhere.
Note that $p_{S^c}$ is possible to evaluate directly with access to $S$ as $1-p_S$.
Let $\theta = \mathbb{E}_{e \sim \prog \mid e \in S^c}[\chi_L(e)].$
As the notation suggests, we view $\theta$ as the parameter for a Bernoulli random variable representing the ``logical error occurs'' event.
We can substitute an empirical estimate $\hat{\Theta}$ for $\theta$ to get an estimated logical error rate:
\[p_L(v) \approx p_{L \cap S}(v) + (1-p_{S}(v))\hat{\Theta}.\]
To obtain such an empirical estimate, we take $N$ samples $s_1, \ldots, s_N$ from $S^c$ (via rejection sampling) and compute the sample mean $\frac1N \sum_{i=1}^{N} \chi_L(e_i)$.

\paragraph{Confidence intervals} So far, we have established a point estimate of the logical error rate. 
However, analogous to our sound upper and lower bounds from enumeration, we would like some sort of guarantee that the true logical error rate lies within some range.
Since we are taking random samples, this guarantee is now a \emph{probabilistic} guarantee, yielding a confidence interval. 
For a confidence parameter $\alpha$, we can use a standard approach to derive $1-\alpha$ confidence intervals from Chernoff bounds (e.g. \cite[Section 10.1]{Lattimore_Szepesvári_2020}), as formalized in \zcref{thm:conf}.

\begin{theorem}
  \label{thm:conf}
  Let $\theta \in [0,1]$ be an unknown parameter
  and
  $X_1, \dots, X_N$
  be i.i.d.~$\mathrm{Bernoulli}(\theta)$.
  Let $\hat{\Theta} \coloneqq \frac{1}{N}\sum_{i=1}^{N}X_i$ be the sample mean.
  On the event $0 < \hat\Theta < 1$,
  define the random endpoints $L_\alpha \in [0, \hat\Theta]$ and $U_\alpha \in [\hat\Theta, 1]$
  to be the unique solutions to
  \begin{align*}
    KL(\hat{\Theta} || U_\alpha) =  KL(\hat{\Theta} || L_\alpha) = \frac{1}{N}\ln\left({\frac{2}{\alpha}}\right),
  \end{align*}
  where $KL(p || q) \coloneq p\ln\left(\frac{p}{q}\right) + (1-p)\ln\left(\frac{1-p}{1-q}\right)$
  is the Kullback-Leibler divergence.
  The random interval
  \begin{equation*}
  [L, U] \coloneq \begin{cases}
    \left[0, 1 - (\alpha/2)^{1/N}\right] & (\hat\Theta = 0) \\
    [L_\alpha, U_\alpha] & (0 < \hat\Theta < 1) \\
    \left[(\alpha/2)^{1/N}, 1\right] & (\hat\Theta = 1)
  \end{cases}
  \end{equation*}
  is a $1-\alpha$ confidence interval for $\theta$, i.e.,
  \begin{equation*}
  \Pr_\theta\sett*{\theta \notin [L, U]} \leq \alpha.
  \end{equation*}
\end{theorem}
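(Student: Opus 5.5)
The plan is to bound the two ways the interval can miss $\theta$ separately, $\Pr_\theta\{\theta > U\} \le \alpha/2$ and $\Pr_\theta\{\theta < L\} \le \alpha/2$, and conclude by a union bound. By symmetry ($X_i \mapsto 1 - X_i$ sends $\theta \mapsto 1-\theta$, $\hat\Theta \mapsto 1-\hat\Theta$, and swaps the roles of $L$ and $U$, using $KL(p\|q) = KL(1-p\|1-q)$), it suffices to treat the upper tail $\theta > U$. Throughout write $c \coloneq \frac1N \ln(2/\alpha)$.

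\textbf{Step 1: unify the three cases.} For fixed $p \in [0,1)$, the map $q \mapsto KL(p\|q)$ on $[p,1)$ is continuous, equals $0$ at $q=p$, is strictly increasing (its derivative $(q-p)/(q(1-q))$ is positive for $q>p$), and tends to $\infty$ as $q \to 1$. Hence for every $\hat\Theta < 1$ there is a unique $U \in [\hat\Theta,1)$ with $KL(\hat\Theta\|U) = c$, and $q > U \iff KL(\hat\Theta\|q) > c$ for $q \ge \hat\Theta$. The case $\hat\Theta = 0$ fits this description. Since $KL(0\|q) = -\ln(1-q)$, solving $-\ln(1-U) = c$ gives $U = 1 - (\alpha/2)^{1/N}$, which is exactly the stated endpoint. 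When $\hat\Theta = 1$ we have $U = 1$, so $\theta > U$ is impossible. Therefore
\[
\{\theta > U\} \subseteq \{\hat\Theta < \theta \text{ and } KL(\hat\Theta\|\theta) > c\}.
\]

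\textbf{Step 2: convert to a tail event for $\hat\Theta$.} Fix $\theta$; if $\theta = 0$ the event is empty. Otherwise, on $[0,\theta]$ the map $t \mapsto KL(t\|\theta)$ is strictly decreasing from $-\ln(1-\theta)$ down to $0$.
\begin{itemize}
\item If $-\ln(1-\theta) \le c$, no $t \le \theta$ has $KL(t\|\theta) > c$, so the event is empty.
\item Otherwise there is a unique $t^* \in (0,\theta)$ with $KL(t^*\|\theta) = c$, and the event is contained in $\{\hat\Theta < t^*\} \subseteq \{\hat\Theta \le t^*\}$.
\end{itemize}

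\textbf{Step 3: apply the Chernoff bound.} I will use the standard Chernoff bound in KL form: for $t \le \theta$,
\[
\Pr_\theta\{\hat\Theta \le t\} \le \exp\bigl(-N\, KL(t\|\theta)\bigr).
\]
It follows from Markov's inequality applied to $e^{-\lambda \sum_i X_i}$ and optimizing over $\lambda \ge 0$, or it can be cited from \cite[Section 10.1]{Lattimore_Szepesvári_2020}. Taking $t = t^*$ gives $\Pr_\theta\{\theta > U\} \le e^{-Nc} = \alpha/2$. The mirrored argument gives $\Pr_\theta\{\theta < L\} \le \alpha/2$, and the union bound finishes the proof.

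The main obstacle is Step 1 together with the boundary bookkeeping. The key points are the monotonicity and continuity of $KL$ in each argument, which give uniqueness of $L_\alpha, U_\alpha$ and justify the equivalence $q > U \iff KL(\hat\Theta\|q) > c$. The other point is checking that the closed-form endpoints at $\hat\Theta \in \{0,1\}$ agree with the KL characterization, so that no separate case analysis of the coverage probability is needed. The Chernoff step itself is routine.
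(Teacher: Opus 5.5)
Your proposal is correct and takes essentially the same route as the paper. Both arguments use the KL-form Chernoff bound on each tail, use the monotonicity of $KL(\cdot\|\theta)$ on either side of $\theta$ to turn a KL level set into a one-sided tail event, and combine the tails by a union bound with $\alpha/2$ per side. The paper bounds $\Pr_\theta[N\cdot KL(\hat\Theta\|\theta)\ge t]\le 2e^{-t}$ and then inverts, whereas you bound the two miss events directly and use the $X_i\mapsto 1-X_i$ symmetry. Your Step 2 case split for when no $t^*$ exists ($-\ln(1-\theta)\le c$) is more careful than the paper, which asserts that $u_t^-$ and $u_t^+$ exist for every $t>0$; that fails once $t/N$ exceeds $-\ln(1-\theta)$ (resp.\ $-\ln\theta$), although the corresponding event is then empty.
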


Note that this theorem also applies to the use of sampling to directly estimate the logical error rate. 
We will use it to derive confidence intervals for \stim sampling in our experimental evaluation.

\section{Implementation and Evaluation}
\label{sec:eval}
We implemented our estimation algorithm as a Python library. 
The library takes \stim circuits as input. 
It leverages the \stim functionality to compile circuits to detector error models, which are a list of Bernoulli error channels with their probabilities and impact on the syndrome and observable.
From a detector error model, we can extract $\syndfunc(e)$ and $\obsfunc(e)$ for an error bitstring $e$. Our implementation is parallelized, spawning $k-1$ enumeration workers when run on a $k$-core machine. 
Each worker is tasked to explore a different region of the error bitstring space, running the decoder on each enumerated bitstring and classifying them into those which result in logical errors and those which do not. 

\paragraph{Experimental setup}  Unless otherwise noted, the following experimental conditions
apply to all empirical evaluations. Both our search and \stim sampling are allotted 16 cores of an AMD EPYC™ 7763
2.45 GHz Processor and 32 GB of RAM accessed via a distributed research cluster.
We terminate upon reaching 24hrs of runtime or $10^9$ shots.

\paragraph{\qec program benchmarks} We applied our approach to memory experiments with a leading class of quantum error correcting code: rotated surface codes.
We construct several memory experiment programs by varying the code distance $d$ and the number of rounds of syndrome extraction $r$. 
Specifically, we consider $d \in \{3,5,7,9\}$ and $r \in \{1,3,5, 7,9 : r \leq d\}$.
We used the \textsc{si1000} noise model \cite{yoked} in all cases. 
The \textsc{si1000} noise model was developed at Google for realistic modeling of errors on superconducting devices ("\textsc{si1000}" is short for superconducting-inspired, 1000 nanosecond error-correction cycle). 
It applies depolarizing error channels before each gate, reset, and measurement operation.
The noise strength is controlled by a single parameter $p$, which sets the depolarizing probability for two-qubit gates.
The error rates of other operations are defined as fixed multiples of $p$.
We evaluate with $p=0.001$, which roughly corresponds to the current state-of-the-art in quantum hardware, along an order of magnitude in both directions: $p=0.01$ representing a high-noise environment, and $p=0.0001$ to represent improved hardware of the near future.

\paragraph{Decoders} We evaluated the three different decoders that represent the current state-of-the-art: \decname{pymatch} \cite{pymatching}, \decname{bp-osd} \cite{bp-osd}, and \decname{relay-bp} \cite{relaybp}.
\decname{pymatch} is based on a reduction of the decoding problem to minimum-weight perfect matching.
\decname{bp-osd} and \decname{relay-bp} are based on belief-propagation in Bayesian networks.

\input{fig-conv-rate}
\input{fig-bp-osd-vary-noise-strength}
\input{fig-pymatch-d7-r1}

\paragraph{Research Questions} We designed our  evaluation to answer the following research questions. 
\begin{itemize}
  \item[\textbf{(RQ1)}] How efficient are our estimation strategies as compared to random sampling for the Accuracy problem?
  \item[\textbf{(RQ2)}] What bounds can we certify for the general Robustness problem?
  \item[\textbf{(RQ3)}] What is the impact of varying our error bitstring enumeration strategy?
\end{itemize}
\subsection{(RQ1) Accuracy}
To address (RQ1), we compared our approach to \stim simulation, the most widely-used sampling-based approach to the Accuracy problem. 
We turn the point estimate of logical error rate returned by \stim into a confidence interval using the two-sided Chernoff bound method from \zcref{sec:sampling-and-enumeration}.
For both the \stim sampling baseline and our own probabilistic bounds from combining enumeration with sampling, we fix $\alpha = 0.01$, so there is a 99\% probability that the actual logical error rate is within the interval defined by the bounds.

\zcref{fig:conv-rate} summarizes the efficiency of different approaches for the \decname{relay-bp} decoder on the surface code circuits.
``Enumeration''  refers to our base algorithm (\zcref{alg:rob-bound}) and ``Enumeration \& Sampling'' uses the method from \zcref{sec:sampling-and-enumeration} to obtain probabilistic bounds with 10,000 random samples.
For each algorithm, we plot the number of shots required to estimate the logical error rate within an order of magnitude.
That is, we find the minimum number of shots such that $u/\ell \leq \sqrt{10}$, where $u$ and $\ell$ are the upper bound and lower bound provided by the algorithm (respectively).
We emphasize that this direct comparison is not entirely apples-to-apples, since \stim sampling and Enumeration \& Sampling yield merely probabilistic bounds, while Enumeration bounds are unconditionally sound.
An `x' indicates that the corresponding algorithm failed to reach the convergence criterion within the time and memory bounds. 
If no algorithm reached the condition, the benchmark is excluded entirely.
We split the data into plots by noise strength to highlight key patterns in relative and absolute performance.

\paragraph{Comparing Enumeration to Sampling}
Decreasing the noise strength has an opposite effect on random sampling as compared to our two enumeration-based approaches.
For sampling, reducing the noise strength drastically increases the shots to convergence.
This matches our intuition, as the physical error rate decreases, so does the logical error rate.
Thus, we are attempting to estimate a smaller quantity, necessitating more samples. 
On the other hand, enumeration performs \emph{better} at lower noise strength.
A lower noise strength has no effect on the number of error bitstrings we are able to explore with enumeration because the runtime of the post-hoc computation of bitstring probability does not depend on the probability itself.
The dominant effect is the concentration of more of the probability mass on a few, low Hamming weight strings. 

The opposite scaling results in a different relative efficiency story at different noise levels.
We see \stim is at least one order of magnitude more efficient for analyzing the \decname{relay-bp} decoder on each program at the high-noise strength,
while the enumeration-based approaches are more efficient at the lowest noise strength.
We focus on this relationship by fixing a single circuit in \zcref{fig:bp-osd-vary-noise-strength}.
Each of these plots shows the bounds on logical error rate as a function of time.
The decoder \decname{bp-osd} and program ($d=5, r=1$ surface code) are fixed, but the noise strength varies. 
We expect physical error rates to decrease with time as quantum hardware continues to improve, so these results suggest our approach is relevant to the future of quantum system design.

\paragraph{Comparing Enumeration \& Sampling to Enumeration} The data also support the claim that the ``Enumeration \& Sampling'' approach adds value beyond enumeration alone.
For example, \zcref{fig:conv-rate-high-noise} shows Enumeration alone only converges for 2 benchmarks at high noise strength, but Enumeration \& Sampling is able to leverage the strengths of sampling in this regime and converges for all 14 programs.
We highlight this effect by fixing a benchmark in \zcref{fig:pymatch-d7-r1}.

\begin{mybox}
\textbf{Summary:} Sampling converges in fewer shots than Enumeration at the highest noise level, but the opposite is true at the lowest.
Combining Enumeration with Sampling universally reduces the shots required, sometimes dramatically, at the cost of trading unconditional bounds for probabilistic guarantees.
\end{mybox}

\subsection{(RQ2) Robustness}
Next, we turn to evaluating the scalability of our approach for the Robustness problem. 
For each of our benchmark instances, we construct a Robustness instance with 10\% uncertainty.
That is, given a concrete program with Bernoulli channel parameters $\vctr{v}$, we constrain $\vctr{x}$ in the symbolic program to the hyperrectangle $0.9 \cdot v \leq x \leq 1.1\cdot v$.

\input{fig-rob-conv-rate}

\zcref{fig:rob-conv-rate} summarizes the results similarly to \zcref{fig:conv-rate}.
However, notice here that the grouping of the bars is by decoder, since we have only a single robustness algorithm to evaluate.
It should be unsurprising that the general Robustness problem is harder to solve than the degenerate case of the Accuracy problem.
In our experiments, the result is that our algorithm can only prove tight Robustness bounds on relatively small programs. 
Nevertheless, verification of decoder robustness on small instances can increase confidence in performance on larger ones.
Note that even these small programs induce constrained optimization problems over huge search spaces.
For example, the $d=3, r=3$ circuits contain 286 error channel statements, so the maximization problem is over 286 variables.
A set of $2^{286}$ possible variable assignments is far beyond the reach of brute-force enumeration, so solving this instance demonstrates the effectiveness of partial derivative pruning.

\paragraph{Relative decoder robustness}
Decoder robustness provides a new source of comparison between decoders.
In our evaluation, we find the \decname{relay-bp} decoder performs worse than the other two in the presence of error rate uncertainty.
Among the 6 programs in \zcref{fig:rob-conv-rate}, the \decname{relay-bp} decoder exhibits the largest gap between accuracy value and the worst-case robustness value at a difference of 28.6\%.
The gap for the \decname{bp-osd} and \decname{pymatch} decoders are 21.6\% and 21.7\%, respectively.
For one \qec program ($d=3, r=3$, $p=0.001$), we even see a different relative ordering of decoders when focusing on robustness instead of accuracy. 
The \decname{relay-bp} decoder is more accurate, but less robust than the \decname{bp-osd} decoder.

\input{fig-split-search}

\begin{mybox}
  \textbf{Summary:} Our algorithm is able to quantify robustness for several benchmarks, with the largest being the $d=3, r=3$ surface code circuit, demonstrating the feasibility of our approach for solving the Robustness problem over hyperrectangles.
  The results of our robustness analysis suggest that the \decname{relay-bp} decoder may be less robust than \decname{pymatch} and \decname{bp-osd}.
\end{mybox}
\subsection{(RQ3) Enumeration Strategies}
Finally, we study the effect of the choice of strategy for enumerating error bitstrings on the performance of our algorithm.
We focus on Accuracy problems to isolate enumeration from solving.

\paragraph{Split-search} We begin with an evaluation of the ``split-search'' strategy. 
Recall that this strategy divides a pool of enumeration workers in half, assigning one half to begin their search from errors of weight $\lfloor d/2\rfloor + 1$. 
We compare the split-search strategy to standard enumeration by Hamming weight on three representative example programs in \zcref{fig:split-search}.

Overall, we find that the ``split-search'' strategy and standard enumeration have early differences in bound tightness, but tend to converge as the number of shots increases.
In the limited shot regime, the better approach depends on the program. 
For the distance 5 example, the higher concentration of bitstrings which cause the decoder to make a logical error is outweighed by the lower probability of the visited bitstrings.
However, at larger code distances the ``jump'' forward into the high-weight error bitstring space is effective. 
In the distance 9 program, standard enumeration takes about $6\times$ as many shots to reach the first recorded lower bound from the split-search method because of effort spent enumerating low-weight error bitstrings which do not correspond to logical errors.

\paragraph{Local search} 
Next, we evaluate the impact of the local moves extension. 
We show a comparison to the baseline enumeration across the entire benchmark suite in \zcref{fig:ls-tables}.
For each local move set we record two summary statistics. 
The first column is the ``fraction improved,'' counting the number of benchmarks (program, decoder pairs) for which the local move strategy converged in fewer shots than the baseline. 
The second column captures the magnitude of difference. 
The shot ratio is defined as the number of shots required for the local move strategy to converge divided by the number of shots for the baseline to converge. 
A shot ratio less than 1 thus indicates better performance.
We take the geometric mean of shot ratio over all benchmarks.
The two tables separate the Enumeration bounds from the probabilistic Enumeration \& Sampling bounds.

For both types of bounds, we see that \textsc{shift} strategy is an improvement over the baseline.
However, the improvement in Enumeration bounds is more pronounced.
We see faster convergence on all but one benchmark for an average improvement of 27\%, as compared to faster convergence on only 68\% of benchmarks for Enumeration \& Sampling for an average improvement of 16\%.
On the other hand, the \textsc{flip} strategy is roughly on par with the baseline, improving on a little over half of the benchmarks, though with an average shot ratio greater than 1 indicating a larger gap in cases where the baseline performed better.
The combination of the two strategies yields intermediate results.

\input{fig-ls-tables}

\begin{mybox}
  \textbf{Summary:} 
  Overall, the effectiveness of alternative enumeration strategies is mixed.
  The split-search strategy induces different behavior in the low-shot regime, underperforming relative to standard Hamming weight order enumeration for low-distance benchmarks, and out-performing for high-distance benchmarks.
  In terms of local search, the \textsc{flip} strategy results in roughly equivalent performance to the baseline, while the \textsc{shift} strategy appears slightly more effective.
\end{mybox}
\section{Related Work}

\paragraph{Quantum programming languages}
As previously mentioned, our notion of \qec programs is a direct formalization of the \stim circuit format \cite{Gidney2021stimfaststabilizer}.
There is an extensive history of formally defined quantum programming languages, typically designed with program verification in mind.
Our language of \qec programs is closely related to the \keyword{qwhile} language for quantum programs \cite{qwhile}, which was introduced with a denotational semantics and relatively complete Hoare-style proof system.
Subsequent work has mechanized the semantics of \keyword{qwhile} in the Roqc \cite{CoqQ} and Isabelle/HOL \cite{qhl} proof assistants. 
Other notable examples of formalized quantum programming languages include \qwire \cite{qwire}, \sqir \cite{voqc}, and \qbricks \cite{qbricks}.
Our language is simpler than these examples in that it does not support unbounded looping or classical flow.
On the other hand, probabilistic error channels and syndrome and observable declarations are a unique feature of the error correction setting that we consider in our formal semantics.

\paragraph{Probabilistic programming languages}
Probabilistic programming languages are a rich area of study \cite{sampson2014passert,sem-prob-progs-kozen,dice2020,fairsquare,towner2019pldi,10.5555/3023476.3023503,problog}.
Like many other applications of probabilistic programming, the Accuracy and Robustness problems are probabilistic inference tasks over our programs.
A fundamental difference between our setting and the typical case is that we assume only black-box access to the decoder itself, since decoding algorithms are too complex to embed directly in a probabilistic programming language.
Therefore, we cannot directly apply symbolic techniques like weighted model counting \cite{dice2020,10.1145/3729334}.

\paragraph{Heuristic methods for decoder accuracy} 
A different approach to addressing the shot inefficiency of direct Monte Carlo sampling has been developed concurrently to this work.
\citet{beverland2025failfasttechniquesprobe} propose two main techniques. 
One is a multi-seeded \emph{splitting method} (building off of prior work \cite{Bravyi_2013,mayer2025rareeventsimulationquantum}).
The splitting method is a standard Monte Carlo algorithm for simulating rare events \cite{rareeventsimbook} based on multilevel Markov Chain Monte Carlo sampling. 
In this context, it is used to extrapolate the logical error rate at a low noise strength using a high-confidence estimate at a higher one.
The other technique is to apply curve-fitting to the \emph{failure spectrum}, the proportion of error bitstrings which cause a logical error at each Hamming weight.
Given a closed-form expression for the failure spectrum, they derive a logical error rate by summing over the contribution at each Hamming weight.
\citet{ye2026scalabletestingquantumerror} also develop a similar curve-fitting framework.
Our work distinguishes itself in the following ways.
\begin{enumerate}
  \item We make no assumptions on physical error model. The curve-fitting approaches can only natively model uniform probabilities of physical errors.
  \item We make no assumptions on logical error structure. Splitting methods assume the set of error bitstrings which cause a logical error is connected under single bitflips, and curve-fitting adds a modeling assumption in the choice of a smooth failure spectrum ansatz.
  \item We can formulate and solve the general Robustness problem.
\end{enumerate}

\paragraph{Analytic and symbolic \qec verification}
Handcrafted threshold existence proofs \cite{yoshida2026prooffinitethresholdunionfind,mwpm-threshold} have been applied to specific decoding algorithms and error-correction codes to show the existence of a finite \emph{threshold}, a physical error rate below which the logical error rate asymptotically tends to 0 as code distance increases. 
This is complementary to our black-box, automatic numerical analysis. 
Another line of work applies symbolic technique and SMT solving to verify \qec protocols \cite{Fang_2024,efficient-qec-verification-pldi-25}.
Here the focus is on the verification of a \qec program itself, assuming a perfect decoder, and checking for the existence of (or verifying the absence of) small sets of physical errors which cause a logical error.

\section{Conclusion} 
In this work, we presented a new pathway for formal reasoning about the behavior of decoders powered by a formal syntax and semantics for \qec programs and a reduction to polynomial optimization.
Our perspective enables the first definition of the Robustness problem for \qec decoders.
Extending our initial exploration of the Robustness problem is a promising direction for future work.
For example, it would be valuable to develop techniques for solving Robustness problems over constraining sets beyond hyperrectangles or to develop decoding algorithms with robustness guarantees in the spirit of robust training in machine learning \cite{dro}.

\section*{Data-Availability Statement}
Our paper is associated with a software artifact that includes our benchmarks, the implementation of our algorithm, and scripts for running experiments and generating plots. 
We will submit this artifact packaged in a Docker image for artifact evaluation.
A preliminary version is available at this URL: \url{https://anonymous.4open.science/r/analyzing-decoders-8034/}.
\bibliographystyle{ACM-Reference-Format}
\bibliography{references}

\ifsubmission\else
  \input{appendix}
\fi

\end{document}

%% file: preamble.tex
\usepackage[T1]{fontenc}
\usepackage{xspace}
\usepackage{listings}
\usepackage{xcolor}
\usepackage{amsmath}
\usepackage{zref-clever}
\usepackage{booktabs}
\usepackage{color}
\usepackage{braket}
\usepackage{mathtools}
\usepackage{tikz}
\usepackage{algorithm}
\usepackage{mathpartir}
\usepackage{langrules}
\usepackage{subcaption}
\usepackage[noend]{algpseudocode}
\usepackage{tikz}
\usepackage[inline]{enumitem}
\usepackage{tcolorbox}
\usepackage{adjustbox}
\usepackage{wrapfig}
\usetikzlibrary{arrows.meta, positioning, fit, calc}

\pgfdeclarelayer{background}
\pgfsetlayers{background,main}
\zcsetup{cap,abbrev}

\newif\ifcomments\commentstrue
\newcommand{\newcommenter}[3]{%
  \ifcomments
    \newcommand{#1}[1]{%
      \begingroup\small\sffamily\color{#2}%
        [#3: ##1]%
      \endgroup%
    }%
  \else
    \newcommand{#1}[1]{\ignorespaces}
  \fi
}
\definecolor{darkgreen}{rgb}{0,0.7,0}
\newcommenter{\abtin}{violet}{Abtin}
\newcommenter{\fsaad}{blue}{Feras}
\newcommand{\stim}{Stim\xspace}
\newcommand{\decname}[1]{\textsc{#1}\xspace}
\DeclareMathOperator{\err}{Err}
\DeclareMathOperator{\syndfunc}{synd}
\DeclareMathOperator{\obsfunc}{obs}
\DeclarePairedDelimiter{\sett}{\lbrace}{\rbrace}
\newcommand{\qec}{\textsc{qec}\xspace}
\newcommand{\X}{\ensuremath{X}\xspace}
\newcommand{\cx}{\textsc{cx}\xspace}
\newcommand{\prog}{\ensuremath{P}\xspace}
\newcommand{\defsynd}{\keyword{declare-syndrome}}
\newcommand{\defobs}{\keyword{declare-observable}}
\newcommand{\decoder}{\ensuremath{d}\xspace}
\newcommand{\synd}{\ensuremath{s}\xspace}
\newcommand{\obs}{\ensuremath{o}\xspace}
\newcommand{\vctr}[1]{\overline{#1}}
\newcommand{\errorset}{\ensuremath{L}\xspace}
\newcommand{\seen}{\ensuremath{S}\xspace}

\newcommand{\rect}{\ensuremath{R}\xspace}
\newcommand{\statement}[1]{\texttt{#1}}
\newcommand{\keyword}[1]{\textbf{\statement{#1}}}
\newcommand{\denote}[1]{\mathchoice%
  {\left\llbracket {#1} \right\rrbracket}
  {\llbracket {#1} \rrbracket}
  {\llbracket {#1} \rrbracket}
  {\llbracket {#1} \rrbracket}%
}
\newcommand{\sqir}{s\textsc{qir}\xspace}
\newcommand{\qwire}{\ensuremath{\mathcal{Q}\textsc{wire}}\xspace}
\newcommand{\qbricks}{\textsc{qbricks}\xspace}
\newcommand{\drawseqset}{\mathcal{D}}
\AddToHook{env/definition/begin}{%
   \zcsetup{countertype={theorem=definition}}}
\zcRefTypeSetup{definition}{Name-sg=Definition}
\AddToHook{env/example/begin}{%
   \zcsetup{countertype={theorem=example}}}
\zcRefTypeSetup{corollary}{Name-sg=Corollary}
\AddToHook{env/corollary/begin}{%
   \zcsetup{countertype={theorem=corollary}}}

\newenvironment{mybox}[1][gray!10]{
    \begin{tcolorbox}[
        left=0pt,
        right=0pt,
        top=0pt,
        bottom=0pt,
        colback=#1,
        colframe=#1,
        width=0.99\linewidth,
        boxsep=2pt,
        arc=0pt,outer arc=0pt,
    ]
}{
    \end{tcolorbox}
}
\zcsetup{cap, nameinlink=false, abbrev}

%% file: fig-conv-rate.tex
%!TEX root=./main.tex
\begin{figure}
  \captionsetup[subfigure]{skip=0pt}
  \begin{subfigure}[b]{0.8\textwidth}
    \includegraphics[width=\textwidth]{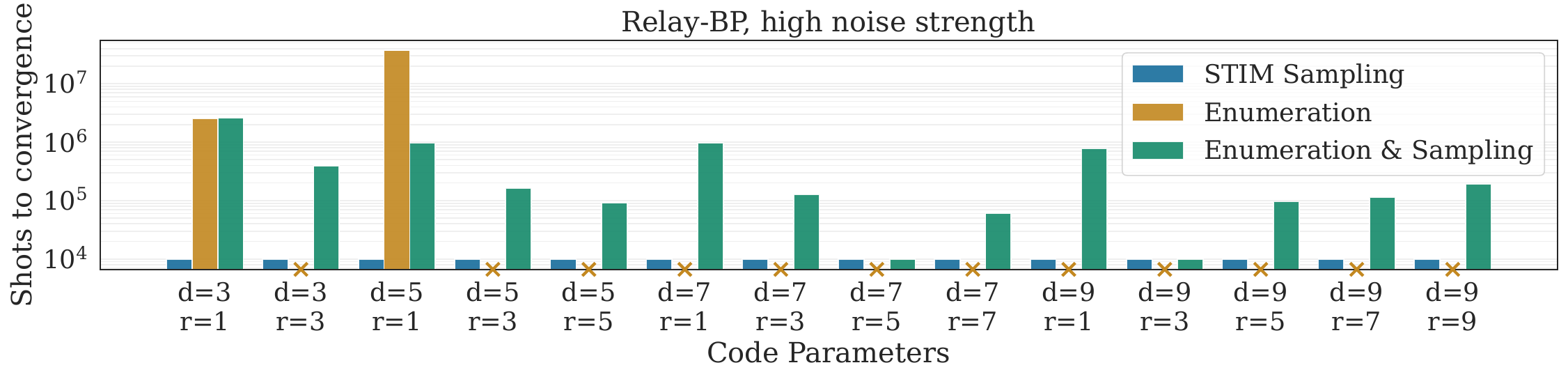}
    \captionsetup{belowskip=6pt}
    \caption{$p = 0.01$}
    \label{fig:conv-rate-high-noise}
  \end{subfigure}
    \begin{subfigure}[b]{0.7\textwidth}
      \includegraphics[width=\textwidth]{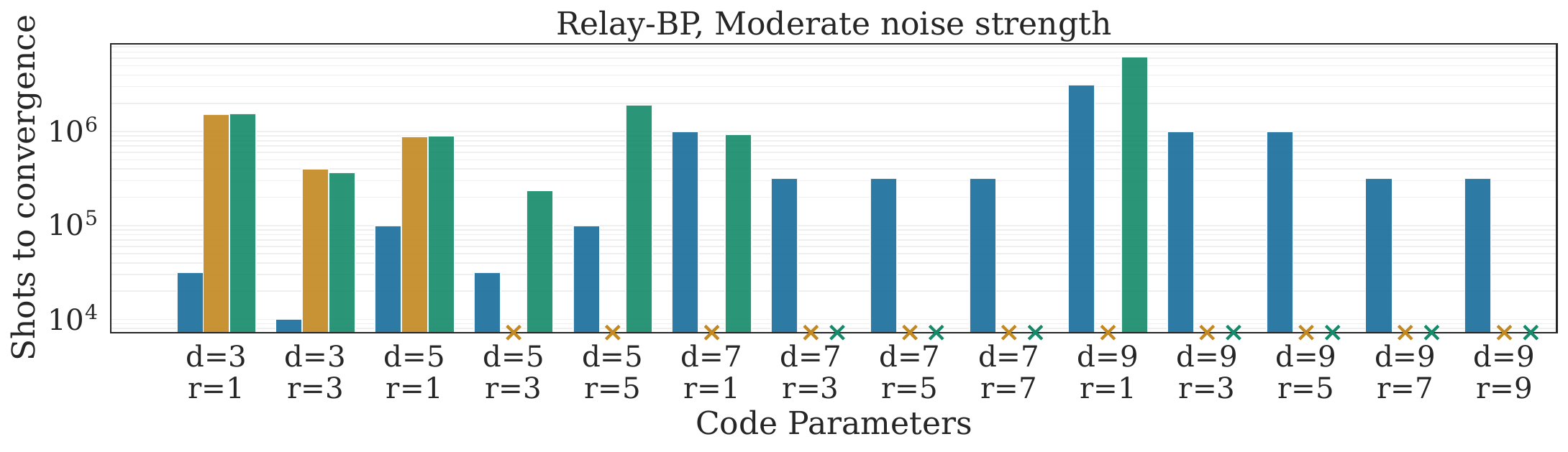}
      \captionsetup{belowskip=6pt}
      \caption{$p = 0.001$}
      \label{fig:conv-rate-mod-noise}
    \end{subfigure}
    \begin{subfigure}[b]{0.45\textwidth}
    \includegraphics[width=\textwidth]{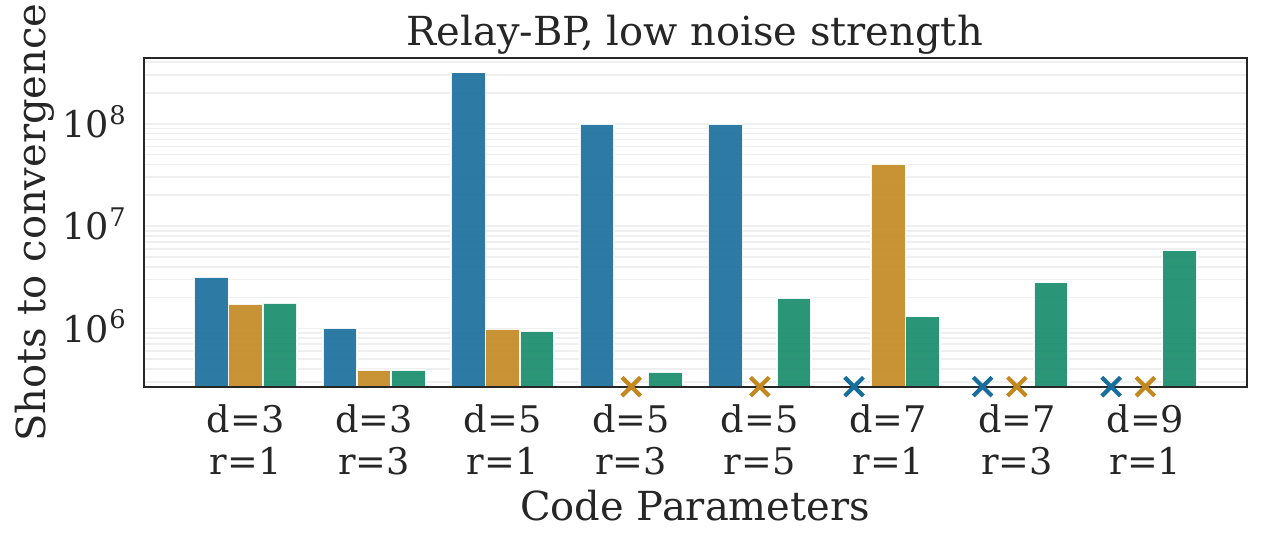}
    \caption{$p = 0.0001$}
    \label{fig:conv-rate-low-noise}
  \end{subfigure}
  \caption{Convergence for the logical error rate problem.}
  \label{fig:conv-rate}
\end{figure}

%% file: fig-bp-osd-vary-noise-strength.tex
%!TEX root=./main.tex
\begin{figure}
  \begin{subfigure}[b]{0.29\linewidth}
    \includegraphics[width=\textwidth]{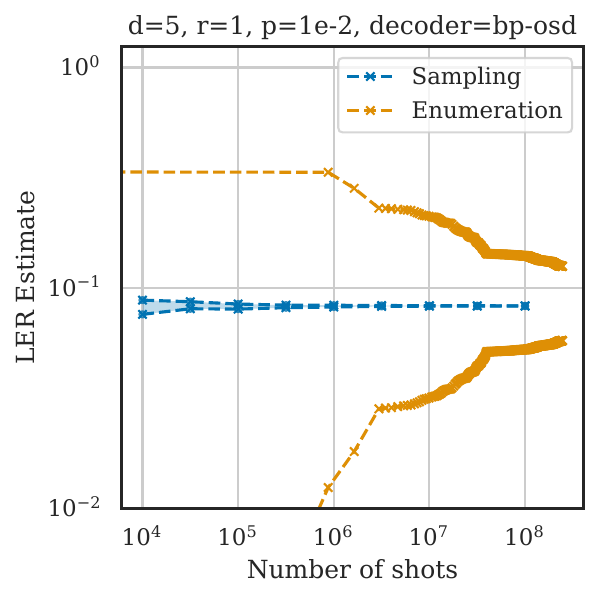}
    \caption{p = 0.01}
  \end{subfigure}
    \begin{subfigure}[b]{0.29\linewidth}
    \includegraphics[width=\textwidth]{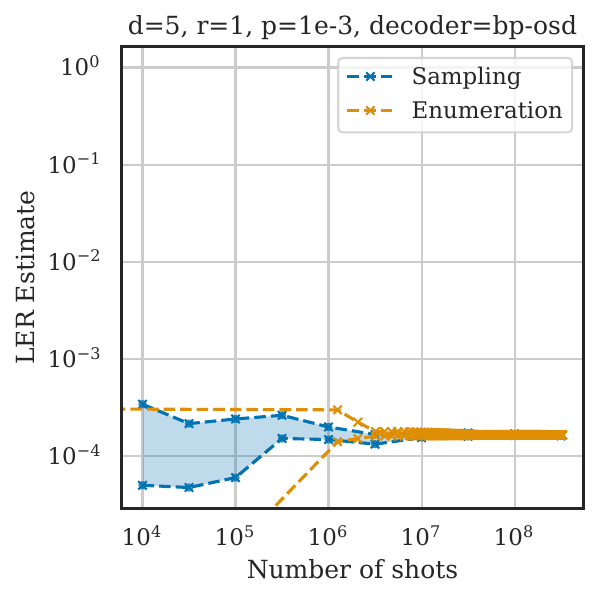}
        \caption{p = 0.001}
  \end{subfigure}
    \begin{subfigure}[b]{0.29\linewidth}
    \includegraphics[width=\textwidth]{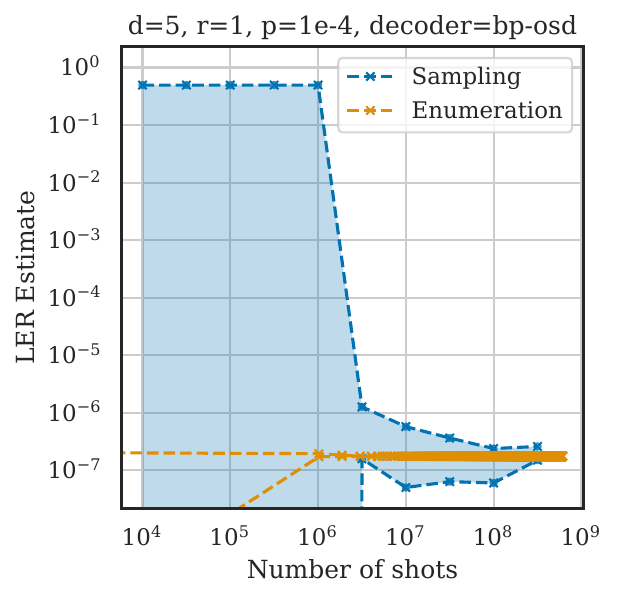}
    \caption{p = 0.0001}
  \end{subfigure}
  \caption{Comparing Enumeration to Sampling. Each plot shows estimation quality as a function of shots at different noise strengths for the same program.}
  \label{fig:bp-osd-vary-noise-strength}
\end{figure}

%% file: fig-pymatch-d7-r1.tex
%!TEX root=./main.tex
\begin{figure}
  \begin{subfigure}[b]{0.29\linewidth}
    \includegraphics[width=\textwidth]{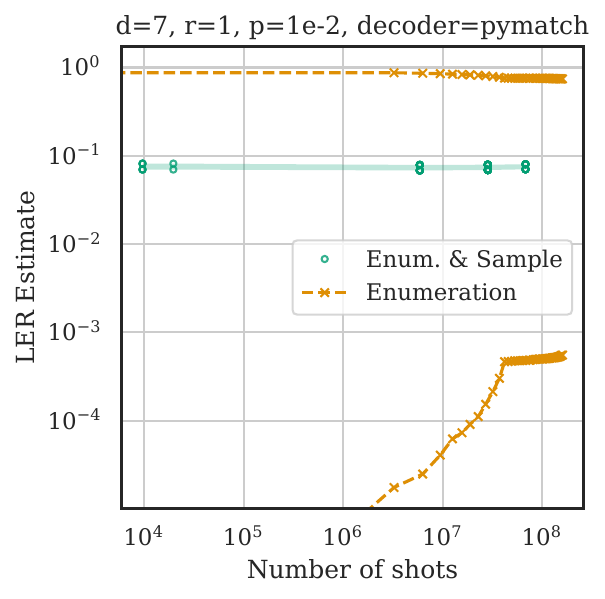}
    \caption{p = 0.01}
  \end{subfigure}
    \begin{subfigure}[b]{0.29\linewidth}
    \includegraphics[width=\textwidth]{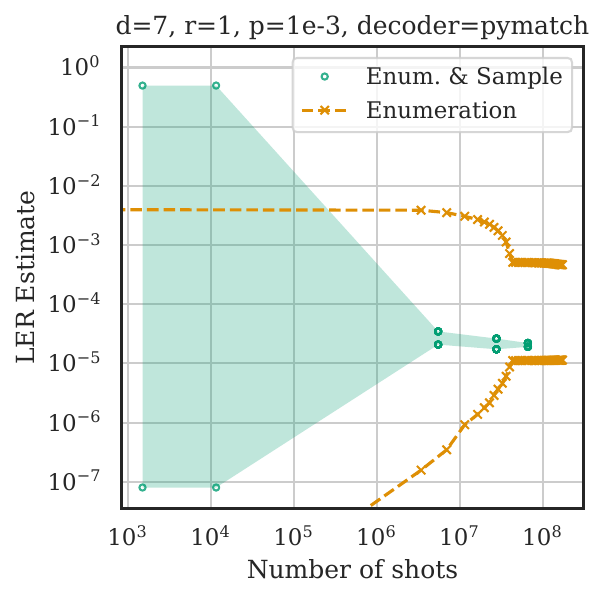}
        \caption{p = 0.001}
  \end{subfigure}
    \begin{subfigure}[b]{0.29\linewidth}
    \includegraphics[width=\textwidth]{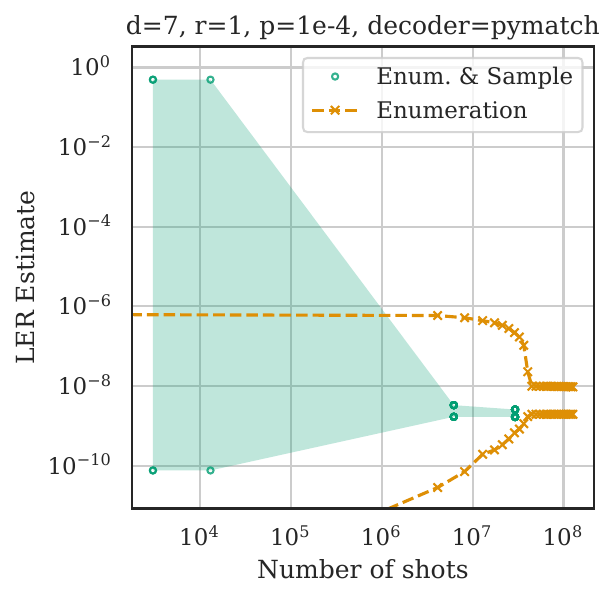}
    \caption{p = 0.0001}
  \end{subfigure}
  \caption{Comparing Enumeration alone to Enumeration \& Sampling. Each plot shows estimation quality as a function of shots at a different noise strength for the same program.}
  \label{fig:pymatch-d7-r1}
\end{figure}

%% file: fig-rob-conv-rate.tex
%!TEX root=./main.tex
\begin{figure}
  \includegraphics[width=0.85\textwidth]{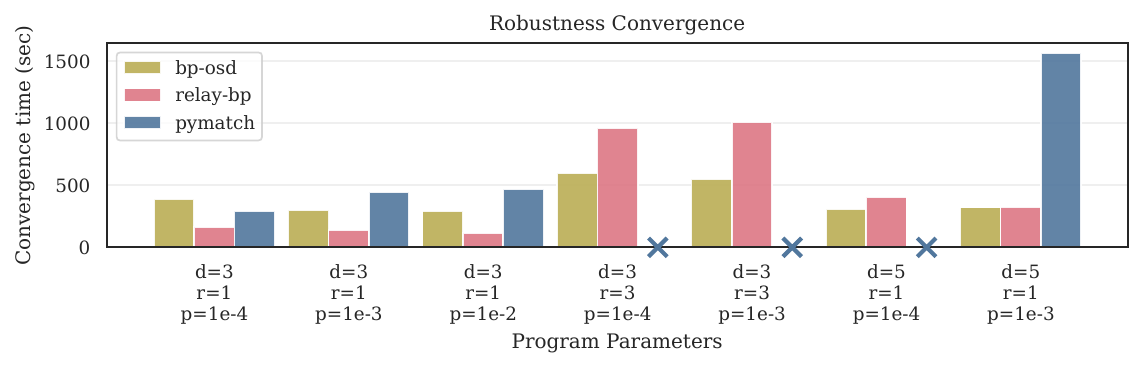}
  \caption{Convergence for the Robustness Problem.}
  \label{fig:rob-conv-rate}
\end{figure}

%% file: fig-split-search.tex
%!TEX root=./main.tex
\begin{figure}
  \begin{subfigure}[b]{0.29\linewidth}
    \includegraphics[width=\textwidth]{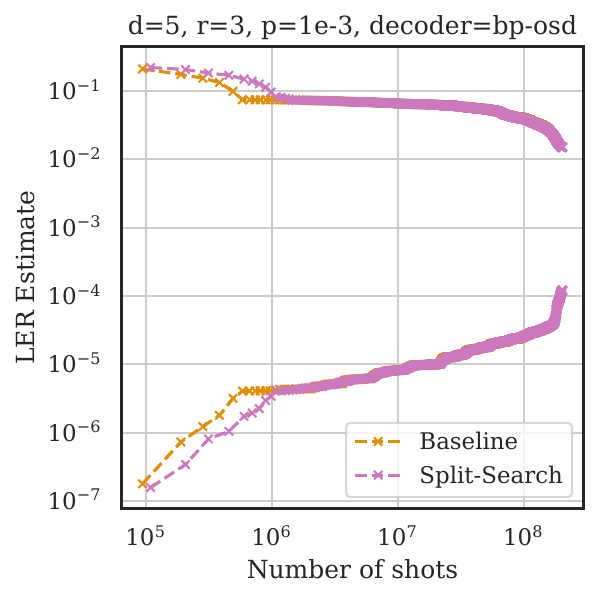}
    \caption{ d = 5 }
  \end{subfigure}
    \begin{subfigure}[b]{0.29\linewidth}
    \includegraphics[width=\textwidth]{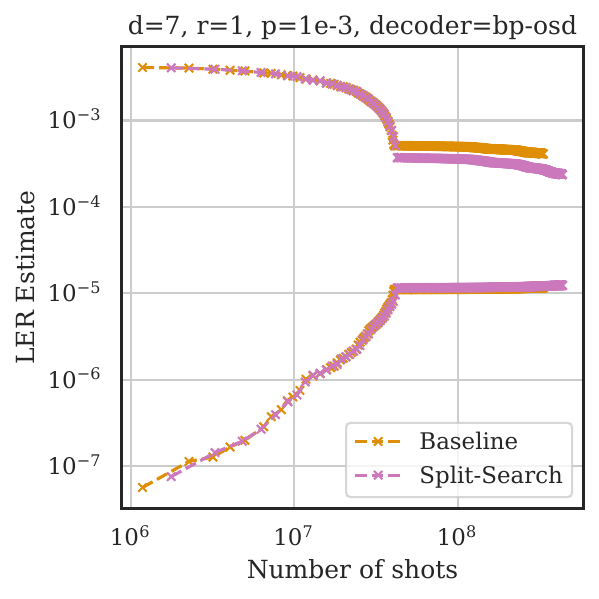}
        \caption{d = 7}
  \end{subfigure}
    \begin{subfigure}[b]{0.29\linewidth}
    \includegraphics[width=\textwidth]{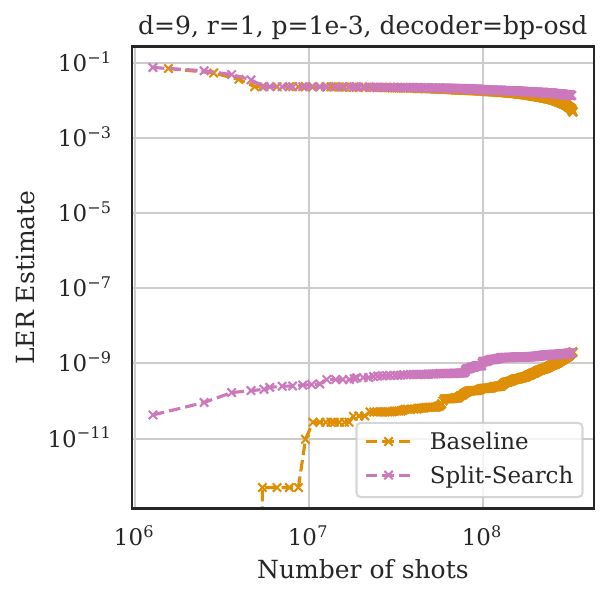}
    \caption{d = 9}
  \end{subfigure}
  \caption{Evaluating the split-search strategy. Each plot shows a program with a different code distance.}
  \label{fig:split-search}
\end{figure}

%% file: fig-ls-tables.tex
%!TEX root=./main.tex
\begin{figure}
  \footnotesize
\centering
\begin{tabular}{l cc}
\toprule
 & Frac.\ improved & Geo. Mean Shot Ratio \\
\midrule
\multicolumn{3}{l}{\itshape Enumeration } \\
\quad \textsc{flip} & 13/25 (0.52) & \textcolor{red}{1.05} \\
\quad \textsc{shift} & 24/25 (0.96) & \textcolor{green!60!black}{0.73} \\
\quad \textsc{shift} $\cup$ \textsc{flip} & 18/25 (0.72) & \textcolor{green!60!black}{0.84} \\
\midrule
\multicolumn{3}{l}{\itshape Enumeration \& Sampling} \\
\quad \textsc{flip} & 46/81 (0.57) & \textcolor{red}{1.08} \\
\quad \textsc{shift} & 55/81 (0.68) & \textcolor{green!60!black}{0.89} \\
\quad \textsc{shift} $\cup$ \textsc{flip} & 53/81 (0.65) & \textcolor{green!60!black}{0.97} \\
\bottomrule
\end{tabular}
\caption{Evaluating local search variants as compared to a Hamming weight order baseline.
The shot ratio for a benchmark is the number of shots until convergence for the local move strategy divided by that of the baseline search. 
}
\label{fig:ls-tables}
\end{figure}

%% file: appendix.tex
%!TEX root=./main.tex
\newpage
\appendix
\section{Proofs}\label{appendix:start}
\paragraph{\zcref{thm:ler-is-poly-eval}} The proof of this theorem is notationally dense, but conceptually a straightforward formalization of the reasoning developed in \zcref{sec:prob-poly}.
We start with the probability of a single error bitstring. 
\begin{lemma}
    For a \qec program $\prog(\vctr{v})$ and error bitstring $e$, we have that $f(e; \prog, \drawseqset) = m_e(\vctr{v})$
\end{lemma}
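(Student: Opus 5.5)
The plan is to count draw sequences directly, using the fact that the draws for distinct error channels are independent coordinates of a product space. Each Bernoulli error channel $E_i$ with parameter $v_i$ is modeled as a choice function $E_i : D_i \to \{U_i, I\}$ on a finite draw set $D_i$, with $|\{\sigma \in D_i : E_i(\sigma) \neq I\}| / |D_i| = v_i$; this is the modeling convention illustrated by the $\statement{Xerr}(0.01)$ example. I will take the draw sequence space to be the product $\drawseqset = D_1 \times \cdots \times D_n$, since a draw sequence $\Sigma = \sigma_1, \ldots, \sigma_n$ supplies exactly one draw to each error channel statement, in program order, via the \ruleref{Err} rule. I will state this explicitly as the intended reading of $\drawseqset$.

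First, I observe that the condition $e_\Sigma = e$ decomposes coordinatewise. By definition of $e_\Sigma$, this condition holds if and only if, for each $i$, $\sigma_i \in A_i^{e_i}$, where $A_i^1 = \{\sigma \in D_i : E_i(\sigma) \neq I\}$ and $A_i^0 = D_i \setminus A_i^1$. It follows that
\[\{\Sigma \in \drawseqset : e_\Sigma = e\} = A_1^{e_1} \times \cdots \times A_n^{e_n},\]
and so its cardinality is $\prod_i |A_i^{e_i}|$. Dividing by $|\drawseqset| = \prod_i |D_i|$ factors the ratio as $\prod_i |A_i^{e_i}|/|D_i|$. Each factor equals $v_i$ when $e_i = 1$ and $1 - v_i$ when $e_i = 0$. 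The product is therefore $\prod_{e_i=1} v_i \prod_{e_i=0}(1-v_i) = m_e(\vctr{v})$.

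The only real obstacle is foundational rather than computational. I must make precise that $\drawseqset$ is the full product of per-channel draw sets, and that each $E_i$ realizes its probability $v_i$ exactly as a ratio of counts. This requires the $v_i$ to be rational, matching the finite-seed model in which $\sigma$ is drawn uniformly from a finite set. I would state this as an assumption inherited from the semantics in \zcref{sec:semantics}. If irrational parameters need to be handled, the fallback is to replace counting with the product of uniform measures on the $D_i$. The same factorization then goes through verbatim with measures in place of cardinalities. No circuit semantics or well-definedness is needed for this lemma, because $f$ depends only on the error channels.
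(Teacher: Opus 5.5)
Your proof is correct and follows essentially the same route as the paper. Both arguments factor the count of realizing draw sequences over the product draw space into per-channel ratios equal to $v_i$ or $1-v_i$, which yields $m_e(\vctr{v})$. The differences are cosmetic: the paper uses a single shared draw set $\delta$ with $\drawseqset = \delta^n$ instead of per-channel sets $D_i$, and it simply asserts the product factorization that you justify explicitly through $A_1^{e_1} \times \cdots \times A_n^{e_n}$.
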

\begin{proof}
   First, recall the random draw space $\delta$ is chosen precisely so that it can model the probabilities of individual Bernoulli error channels in the program $\prog(v_1, \ldots, v_i)$, meaning that for each argument $v_i$:
\[ 
\frac{|\sigma \in \delta  : E_i(\sigma_i) \neq  I| }{|\delta|} = v_i.
\]
It follows that 
\[ 
 \frac{|\sigma \in \delta  : E_i(\sigma_i) =  I| }{|\delta|} = 1 - v_i.
\]
Next, note that we can decompose the probability of an error bitstring in the probabilities of each individual error channel.
\[ 
f(e; \prog, \drawseqset) = \frac{|\{\Sigma \in \drawseqset  : e_\Sigma = e\}|}{|\drawseqset|} = \prod_{e_i = 1} \frac{|\sigma \in \delta  : E_i(\sigma_i) \neq  I| }{|\delta|} \prod_{e_i = 0} \frac{|\sigma \in \delta  : E_i(\sigma_i) =  I| }{|\delta|}.
\]
Here, $\drawseqset$, as before, refers to the space of sequences of $n$ draws, $\drawseqset = \delta^n.$
Combining these facts, we obtain 
\[f(e; \prog, \drawseqset) = \prod_{e_i = 1} v_i \prod_{e_i = 0}(1-v_i).\]

But this is exactly the evaluation of the error minterm for $e$ at the point $\vctr{v}$, completing the proof.
\end{proof}
With this lemma in hand, we can proceed to prove the theorem. 
Starting with \zcref{def:ler}, we rewrite the Accuracy as a probability over error bitstrings
\[\err(\prog, \decoder) = \textrm{Pr}_{(s,o) \sim P}[\decoder(s) \neq o] = \textrm{Pr}_{e \sim P}[\decoder(\syndfunc(e)) \neq \obsfunc(e)] = \sum_{e \in \errorset} f(e; \prog, \drawseqset) \] 
Then, we apply the lemma to reach the equality
\[\err(\prog, \decoder) = \sum_{e\in\errorset} m_e(\vctr{v}).\]
The righthand side here is the definition of $p_L(\vctr{v})$, so we are done. 
\paragraph{\zcref{thm-rob-is-poly-opt}} This follows directly from \zcref{thm:ler-is-poly-eval}.

\paragraph{\zcref{thm:bounds}} Observe that error polynomials are positive over this space (consistent with their interpretation as a probability).
Thus, we obtain the first inequality by noting that $p_{L} = p_{L \cap S} + p_{L \cap S^c} \geq p_{L \cap S}$.
For the second, we additionally use the fact that $p_L = 1-p_{L^c}$ and then apply the same reasoning:  $p_{L^c} = p_{S \setminus L} + p_{S^c \setminus L} \geq  p_{S \setminus L}$,
meaning that $p_L = 1 - p_{L^c} \leq 1- p_{S \setminus L}$. 

\paragraph{\zcref{thm:rob-bounds}} Follows from \zcref{thm:bounds} and the simple observation that $\max_{x \in X}(1-x) = 1 - \min_{x \in X}(x)$.

\paragraph{\zcref{thm:opt-at-corner}} Let $r$ be an interior point of the hyperrectangle, so that there exists some $i$ such that $r_i \not \in \{\ell_i, u_i\}.$
Consider the partial derivative $\frac{\partial p}{\partial x_i} = a(1-x_i) + b(x_i)$ at $r$. 
If $a > b$, then let $r'$ be equal to $r$ except that $r_i = \ell_i$ and note that $p(r') > p(r).$ Otherwise, choose $r'_i = u_i$, and we have that $p(r') \geq p(r).$
Repeated application of this argument shows that any value of $p$ obtained by an interior point is at least matched at a corner. 

\paragraph{\zcref{thm:alg-correct}}
Let $v$ be the vertex obtained by this procedure and $v^*$ be a vertex that achieves a global optimum. 
First, note that $v^*$ cannot differ from $v$ at an index $i$ for which we certify positive or negative slope over the entire set. 
If it did, then in particular $\frac{\partial p}{\partial x_i}(v^*)$ would have the corresponding sign and we could choose the other value for $x_i$ and increase the value of $p$. 
Then, since we exhaustively search over remaining variables, the one chosen by $v$ must be at least as good as any choice, including the one of $v^*$.

\paragraph{\zcref{thm:conf}}
We start with the Kullback-Leibler formulation of the Chernoff bounds:
\begin{align}
     & \Pr_\theta[\hat{\Theta} \geq u] \leq e^{-N\cdot KL(u||\theta)}  && (u \in [\theta,1]) \label{eq:chernoff-hi} \\
     & \Pr_\theta[\hat{\Theta} \leq u] \leq e^{-N\cdot KL(u||\theta)}  && (u \in [0,\theta]) \label{eq:chernoff-lo}\\
     \text{where } & KL(p || q) = p\ln\left(\frac{p}{q}\right) + (1-p)\ln\left(\frac{1-p}{1-q}\right). \notag
\end{align}

The goal is to invert this bound to construct a $1-\alpha$ confidence interval $[L,U]$ based on the point estimator $\hat{\Theta}$ for $\theta$.

Observe that, for fixed $\theta$, the function $u \mapsto KL(u || \theta)$  is strictly decreasing on the interval $(0,\theta]$, is strictly increasing on $[\theta, 1)$, and is 0 at $\theta$.
Therefore, for any $t > 0$, there exist unique values $u_t^- \in [0, \theta]$ and $u_t^+ \in [\theta, 1]$ such that 
\begin{align}
    N \cdot KL(u_t^-||\theta) = t, \qquad N \cdot KL(u_t^+||\theta) = t.
    \label{eq:kl-equal-t}
\end{align}

Further, the monotonicity of this function implies that
\begin{equation}
    \sett*{N \cdot KL(\hat{\Theta}||\theta) \geq t} \subseteq \sett*{\hat{\Theta} \leq u_t^-}  \cup \sett*{\hat{\Theta} \geq u_t^+}.
    \label{eq:kl-set-contain}
\end{equation}

Therefore
\begin{align*}
\Pr_\theta\left[ N \cdot KL(\hat{\Theta}||\theta) \geq t\right]
    &\leq \Pr_\theta\left[\sett*{\hat{\Theta} \leq u_t^-}  \cup \sett*{\hat{\Theta} \geq u_t^+}\right] && \mbox{(\zcref{eq:kl-set-contain} and monotonicity of measure)} \\
    &\leq \Pr_\theta\left[\hat{\Theta} \leq u_t^-\right] + \Pr\left[\hat{\Theta} \geq u_t^+\right] && \mbox{(union bound)}\\
    &\leq e^{-N \cdot KL(u_t^-||\theta)} + e^{-N \cdot KL(u_t^+||\theta)}  && \mbox{(\zcref{eq:chernoff-hi,eq:chernoff-lo})} \\
    &= 2e^{-t} && \mbox{(\zcref{eq:kl-equal-t}).}
\end{align*}
Choose $t_\alpha \coloneq \ln(2/\alpha)$ so that $2e^{-t_\alpha} = \alpha$.
Then
\begin{align*}
\Pr_\theta\left[ N \cdot KL(\hat{\Theta}||\theta) \leq t_\alpha\right] \geq 1 - \alpha.
\end{align*}
This configuration gives a $1-\alpha$ confidence interval $[L,U]$ where
\begin{align*}
L &\coloneq \inf\sett*{u \in [0, \hat{\Theta}] : KL(\hat{\Theta}|| u) \leq \frac1N(\ln(2/\alpha))}, \\
U &\coloneq \sup\sett*{u \in [\hat{\Theta}, 1] : KL(\hat{\Theta}|| u) \leq \frac1N(\ln(2/\alpha))}.
\end{align*}

On the event $0 < \hat{\Theta} < 1$, the monotonicity of the $u \mapsto KL(\hat\Theta || u)$ function ensures that $L$ and $U$ are the unique values which achieve equality.
If $\hat\Theta = 0$ then $L = 0$ and
\begin{equation}
KL(0||u) = \ln(1/(1-u)) = 1/N\ln(2/\alpha) \implies U = 1 - (\alpha/2)^{1/N}.
\end{equation}
If $\hat\Theta = 1$ then $U = 1$ and
\begin{equation}
KL(1||u) = \ln(1/u) = 1/N\ln(2/\alpha) \implies L = (\alpha/2)^{1/N}.
\end{equation}